\definecolor{HeaderBlue}{HTML}{2C3E50}   
\definecolor{RowGray}{HTML}{F2F4F4}     
\definecolor{RowWhite}{HTML}{FFFFFF}    
\newenvironment{proof}{{\indent  \indent \it Proof:}}{\hfill $\blacksquare$}
\newtheorem{remark}{\bf Remark}
\newtheorem{Pro}{\bf Proposition}
\newtheorem{theorem}{\bf Theorem}
\begin{document}
\title{ISAC Network Planning: Sensing Coverage Analysis and 3-D BS Deployment Optimization}

\author{
	Kaitao Meng, \textit{Member, IEEE}, Kawon Han, \textit{Member, IEEE}, Christos Masouros, \textit{Fellow, IEEE}, and Lajos Hanzo, \textit{Life Fellow, IEEE} 
	\thanks{Kaitao Meng is with the Department of Electrical and Electronic Engineering, University of Manchester, Manchester, UK (email: kaitao.meng@manchester.ac.uk). Kawon Han is with the Department of Electrical Engineering, Ulsan National Institute of Science and Technology (UNIST), Ulsan, South Korea (email: kawon.han@unist.ac.kr). Christos Masouros is with the Department of Electronic and Electrical Engineering, University College London, London, UK (email: c.masouros@ucl.ac.uk). Lajos Hanzo is with School of Electronics and Computer Science, University of Southampton, SO17 1BJ Southampton, UK (email: lh@ecs.soton.ac.uk) }
	\thanks{The financial support of the following Engineering and Physical Sciences Research Council (EPSRC) projects is gratefully acknowledged: Platform for Driving Ultimate Connectivity (TITAN) (EP/X04047X/1; EP/Y037243/1); Robust and Reliable Quantum Computing (RoaRQ, EP/W032635/1); PerCom (EP/X012301/1); India-UK Intelligent Spectrum Innovation ICON UKRI-1859.}
}

\maketitle


\begin{abstract}
Integrated sensing and communication (ISAC) networks strive to deliver both high‐precision target localization and high‐throughput data services across the entire coverage area. In this work, we examine the fundamental trade-off between sensing and communication from the perspective of base station (BS) deployment. Furthermore, we conceive a design that simultaneously maximizes the target localization coverage, while guaranteeing the desired communication performance. In contrast to existing schemes optimized for a single target, an effective network-level approach has to ensure consistent localization accuracy throughout the entire service area. While employing time-of-flight (ToF) based localization, we first analyze the deployment problem from a localization-performance coverage perspective, aiming for minimizing the {\textit{area}} Cramér-Rao Lower Bound (A-CRLB) to ensure uniformly high positioning accuracy across the service area. We prove that for a fixed number of BSs, uniformly scaling the service area by a factor $\kappa$ increases the optimal A-CRLB in proportion to $\kappa^{2 \beta}$, where $\beta$ is the BS‐to‐target pathloss exponent. Based on this, we derive an approximate scaling law that links the achievable A-CRLB across the area of interest to the dimensionality of the sensing area. We also show that cooperative BSs extend the coverage but yield marginal A-CRLB improvement as the dimensionality of the sensing area grows. By exploiting the invariance properties discovered with respect to the displacement, rotation, and symmetric projection deformation, we derive a deployment-invariant structure for conceiving a low-complexity framework for ISAC network deployment. 
We then formulate the joint sensing-communication optimization problem and present a Majorization-Minimization algorithm for designing high-quality deployment solutions. Extensive simulations demonstrate that our framework significantly enhances sensing coverage, while maintaining the desired communication throughput.
\end{abstract}   

\begin{IEEEkeywords}
	Integrated sensing and communication, multi-cell networks, network performance analysis, stochastic geometry, antenna allocation, cooperative sensing and communication. 
\end{IEEEkeywords}


\section{Introduction}
High-quality  localization and data transmission are fundamental for supporting sophisticated applications, including autonomous driving and advanced augmented reality \cite{lu2021real, zhang2022artificial}. However, using separate wireless \textit{localization} networks and wireless \textit{communication} networks can lead to increased interference between the sensing and communication (S\&C) subsystems. Additionally, the rapid growth of wireless data traffic and the increasing scarcity of spectrum have inspired substantial research interests in integrated sensing and communication (ISAC) technologies \cite{Zhang2021OverviewSignal, Liu2022SurveyFundamental, Meng2023SensingAssisted, 10473676, 10216343}, to leverage a shared infrastructure and common waveforms, enabling simultaneous data transmission and echo collection for localization. Based on this unified framework, ISAC can significantly improve spectrum utilization and enhance both cost-effectiveness and energy efficiency \cite{Cui2021Integrating}. Notably, the International Telecommunication Union (ITU) has identified ISAC as one of the six key usage scenarios for the forthcoming  sixth-generation (6G) networks. To date, research efforts have primarily focused on enhancing both the sensing and communication performance within single-cell ISAC scenarios through flexible design strategies at individual base stations (BSs) \cite{Ouyang2022Performance, Liu2022Integrated, Meng2024UAV, Liu2023DistributedUnsupervised, Hua20243DMultiTarget, valiulahi2023net}. Nevertheless, substantial improvements may be attained in overall ISAC system performance at the cellular level by exploring network-level frameworks \cite{Meng2024CooperativeISACMag}. In particular, multi-cell cooperative strategies designed for joint S\&C \cite{meng2024integrated, han2025signaling} emerge as a promising hitherto under-explored research direction.

Recent studies have begun exploring the new degrees of freedom (DoF) in managing the trade-offs between communication and sensing at the network level \cite{Meng2024CooperativeTWC, han2025signaling, meng2023network, chen2022enhancing, meng2024network}. In particular, the authors of \cite{Meng2024CooperativeISACMag} and \cite{han2025network} present a systematic study of the emerging network-level performance metrics and DoF, laying the foundation for understanding and optimizing cooperative ISAC systems. As an example, \cite{Meng2024CooperativeTWC} investigates how to optimize the number of cooperating BSs to maximize the attainable networked sensing and networked communication performance under specific backhaul constraints. Coordinated precoding can repurpose inter-cell interference as useful spatial degrees of freedom, substantially improving sensing performance.
In addition, \cite{chen2022enhancing} proposed a system-level beam alignment scheme that leverages synchronization signal block and time-frequency pattern design for dramatically reducing the beam misalignment probability in THz/mmWave networks. Most recently, \cite{meng2024network} investigated antenna topology optimization in ISAC networks with randomly distributed targets, users, and BSs, introducing a new DoF in the S\&C tradeoff and showing that the optimal topology may lie between centralized massive MIMO and distributed cell free topologies, depending on the localization method and path loss exponent.
Overall, the above works mainly focus on waveform design and resource allocation, while the problem of optimizing the network deployment to enhance cooperation efficiency remains largely unexplored. 

Traditional cellular communication networks have historically been deployed to maximize communication performance, optimizing coverage probability and data throughput \cite{xu2014cooperative}. However, as future networks shift toward integrated functionalities, the dual demands of reliable data transmission and high-accuracy environmental or target sensing become increasingly critical \cite{lu2024integrated}. While communication systems are typically interference-limited, the opportunities for multi-static sensing mean that sensing systems are not interference limited, which creates a fundamental paradigm shift for the ISAC network deployment. On the other hand, inadequate BS placement can induce severe sensing blind spots, undermining seamless localization in mission-critical applications, such as autonomous driving and dynamic target tracking in smart cities. To date, no systematic study has addressed how to (i) reconfigure existing BS layouts or (ii) augment them with additional sites for jointly optimizing sensing fidelity and communication quality. Filling this knowledge gap is essential to fully exploit the network-level DoF identified in prior research \cite{Meng2024CooperativeTWC, han2025signaling, meng2023network, chen2022enhancing}, ensuring that future deployments meet the dual imperatives of robust data transmission and precise environmental awareness.

Cooperative localization using multiple BSs has recently gained attention due to its ability to achieve high localization precision at low power dissipation. In such systems, the deployment of anchor nodes, e.g., BSs, plays a critical role: an optimal deployment improves localization accuracy and also enhances the coverage probability \cite{akbarzadeh2012probabilistic}, which is essential in dynamic environments. By contrast, classical sensor placement strategies typically focus on single-target scenarios. For example, in \cite{Sadeghi2021Target}, the authors analyze the localization accuracy of a single-point target and introduce metrics such as equivalent single-radar gain, coherence gain, and geometry gain to guide the placement of transmit and receive antennas in widely separated MIMO radar systems.
Moreover, the authors of \cite{Yang2023Deployment, Jing2024ISACSky} proposed a new approach using dual-functional unmanned aerial vehicles for enhancing both the communication and localization performance in emergency scenarios by deploying them optimally to meet the ground users' needs. This allows addressing challenges such as cardinality minimization and encountering non-convex localization metrics. While these approaches optimize performance for a single target, the deployment of ISAC services in the cellular domain introduces the broader requirement of achieving robust network-wide sensing coverage. 
The influence of anchor placement on achieving uniform coverage and the corresponding scaling laws for coverage quality have not yet been rigorously established. This knowledge gap highlights the need for further investigations into deployment optimization for attaining enhanced network-wide S\&C performance.

Extending deployment strategies from isolated, single-target sensor placements to full area-wide S\&C coverage requires incorporating spatially varying path loss and geometric gain for every target location, which directly extends and reshapes the design space. Moreover, evaluating sensing performance via metrics such as the CRLB introduces intricate couplings among BS positions and network parameters, rendering analytical tractability elusive and necessitating a complete redesign of optimization algorithms \cite{shi2022device}. The resultant interdependence of BS locations not only amplifies the computational burden of determining an optimal configuration but also invalidates conventional distance‐based cooperation criteria. In cooperative multi‐BS localization, the nuanced interactions among BSs  directly influence both communication reliability and localization precision. Importantly, the scaling law that links the number of BSs to localization accuracy across an entire area of interest remains unexplored, because existing studies have been confined to the single-target scenario. Consequently, achieving robust and network‐wide ISAC performance requires novel deployment and cooperation frameworks that explicitly account for spatial diversity, parameter coupling, and the cooperative dynamics of modern localization techniques.

To overcome these challenges, we formulate the joint deployment problem as the twinned optimization of localization and communication coverage, aiming for minimizing the target localization error across the entire target area, while simultaneously ensuring the required communication quality. By examining the structure of the CRLB, we unveil that the optimal placement exhibits displacement, rotation, and symmetric projection invariance that we detail in Section \ref{DeploymentAnalysis}. These properties allow us to collapse the high-dimensional search space into a much smaller, representative subset. Such invariance-driven reduction not only yields rigorous performance bounds, characterizing, e.g., the minimum achievable average CRLB as a function of network size and area geometry, but also admits low-complexity algorithms for practical ISAC network deployments. This framework directly addresses the trade-off between communication and sensing by decoupling the interdependencies among BSs, paving the way for scalable multi-cell ISAC deployments. Then, based on the Majorization-Minimization (MM) optimization framework, we transform the problem formulated to sequential sub-problems and effectively decouple the complex interrelations among BS positions. This decoupling not only simplifies the underlying optimization problem but also allows us to have more efficient algorithmic solutions.
The main contributions of this paper are summarized as follows:
\begin{itemize}[leftmargin=*]
	\item We propose a cooperative ISAC network architecture that tightly integrates multi-static radar sensing with coordinated multi-point (CoMP) data transmission, by optimizing BS deployment to reveal the fundamental trade-offs between sensing and communication performance. Beyond traditional distributed MIMO radar deployments for individual target, our design simultaneously guarantees area‐wide high‐throughput communications and high-precision sensing.
	\item Employing time-of-flight (ToF) localization, we first establish that the optimal BS deployment strategy for  sensing performance enhancement over an area admits three provable invariances, displacement, rotation, and reflection symmetry, thereby showing that any Euclidean transformation of a candidate layout preserves its sensing-communication performance. Moreover, we prove that for a fixed number of BSs, uniformly scaling the service area by a factor $\kappa$ increases the optimal {\textit{area}} CRLB (A-CRLB) in proportion to $\kappa^{2 \beta}$, where $\beta$ denotes the pathloss exponent between BSs and targets. Leveraging the conclusions derived, we can find an optimal solution for a new area  based on the foundational optimal BS deployment strategy through displacement, rotations, symmetric projection, and area scaling.
	\item We derive an asymptotic scaling law for the minimum A-CRLB $\frac{1}{N^{\,2 \beta/d}}$, where \(N\) denotes  the number of cooperating ISAC nodes and \(d\in\{1,2,3\}\) represents  the spatial dimensionality of the sensing area. This law implies that, for a given increase in $N$, the reduction in sensing error decays more slowly as $d$ grows. Consequently, the incremental gain in coverage afforded by each additional node diminishes in higher-dimensional monitoring scenarios, underscoring an inherent trade-off between achievable sensing accuracy and the spatial extent of the surveillance area in the design of practical ISAC networks.
	\item We reformulate the general BS deployment optimization for joint sensing-communication design within the MM framework, converting it into a sequence of tractable subproblems. By embedding a trust‐region solver in each MM iteration, the resultant algorithm reliably converges to high‐quality stationary solutions, while maintaining only modest per‐iteration computational complexity. Extensive simulations demonstrate that the proposed deployment strategy is capable of enhancing the sensing‐coverage probability by at least 50\%, while maintaining the desired communication throughput.
\end{itemize}

Notation: Lower-case letters in bold font will denote deterministic vectors. For instance, $X$ and ${\bf{X}}$ denote a one-dimensional (scalar) random variable and a random vector (containing more than one element), respectively. Similarly, $x$ and ${\bf{x}}$ denote scalar and deterministic vectorial values, respectively. ${\rm{E}}_{x}[\cdot]$ represents statistical expectation over the distribution of $x$, and $[\cdot]$ represents a variable set.  Let ${\bf{1}}^T \triangleq [ 1, 1 , \dots , 1 ] \in \mathbb{R}^{1 \times N}$ and $\mathbf e_n^T \triangleq [\,0,\ldots,0,\underbrace{1}_{n\text{th}},0,\ldots,0\,]\in\mathbb{R}^{1\times N}$.
\section{System Model}

\begin{figure}[t]
	\centering
	\includegraphics[width=7.4cm]{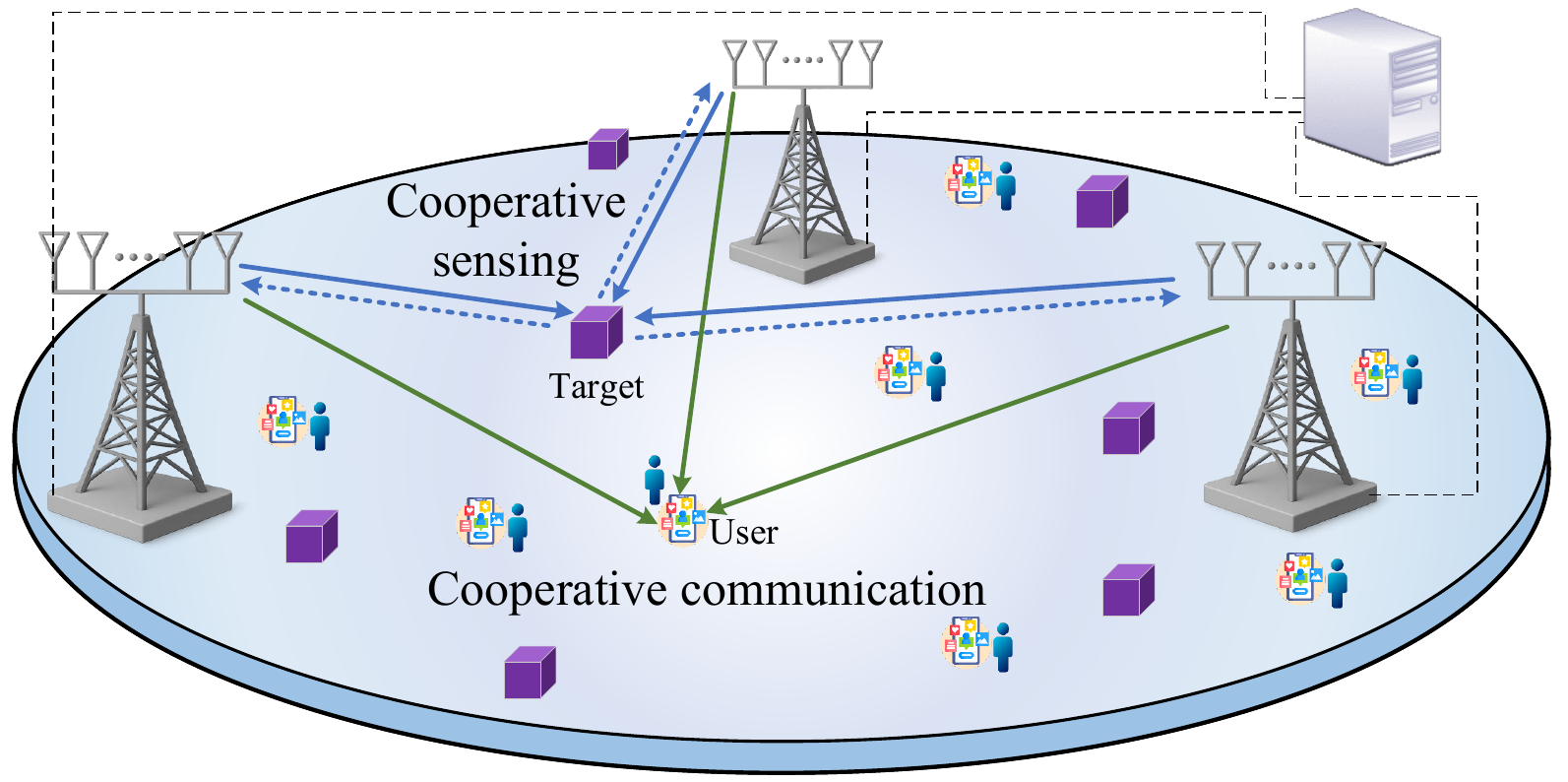}
	\vspace{0mm}
	\caption{Illustration of cooperative ISAC networks.}
	\label{figure1}
\end{figure}

\subsection{Network Model}
As shown in Fig.~\ref{figure1}, BSs within a given area form a cooperative cluster for both communication and sensing. For communication, these BSs can adopt non-coherent joint CoMP transmission, transmitting identical data streams without requiring strict phase synchronization, thereby enhancing received power, while maintaining low coordination overhead \cite{Meng2024CooperativeTWC}. For sensing, the BSs collaborate as a distributed multi-static MIMO radar system, using code-division multiplexing (CDM) to ensure orthogonality among transmitted waveforms and achieve accurate localization under non-coherent signal processing \cite{ropitault2024ieee, nitsche2014ieee}.\footnote{By avoiding phase-level synchronization, our cooperative ISAC strategy strikes a practical balance between system complexity and the performance of ISAC networks.}

\begin{table}[t]
	\small
	\centering
	\begin{threeparttable}
		\caption{Scaling Laws for Different Localization Scenarios}
		\label{tab:scaling-laws}
		\begin{tabular}{l l l l}
			\toprule
			\textbf{Target Type} &
			\textbf{Scenario} &
			\makecell{\textbf{Asymptotic} \\ \textbf{Scaling Law}} &
			\textbf{Ref.} \\
			\midrule
			\multirow{2}{*}[-2ex]{\makecell[l]{Individual\\target CRLB}}
			& Random deployment
			& $\displaystyle\frac{1}{\ln^2 N}$
			& \cite{Meng2024CooperativeTWC} \\
			\cmidrule(l){2-4}
			& Optimal deployment
			& $\displaystyle\frac{1}{N^2}$
			& \cite{Sadeghi2021Target} \\
			\cmidrule(l){1-4}
			\makecell[l]{d-dimensional \\{\textit{area}} CRLB}
			& Optimal deployment
			& $\displaystyle\frac{1}{N^{2 \beta/d}}$  $\star$
			& \textbf{Proposed} \\
			\bottomrule
		\end{tabular}
		\begin{tablenotes}
			\footnotesize
			\item [$\star$] The scaling law characterizes how the CRLB decays as the number of nodes \(N\) grows, serving as an upper bound for the average decreasing trends of CRLB with increasing \(N\) derived in this work, where $d$ denotes the dimension of the sensing area (typically $d=1,2,3$).
		\end{tablenotes}
	\end{threeparttable}
\end{table}

In this study, we explore the optimal BS deployment strategy for cooperative ISAC networks, to reveal that the deployment of BSs brings a new DoF to balance the sensing and communication performance. In this cooperative service area, each BS designs the transmit precoding for sending the information signal $s^c$ to the single-antenna communication user served, together with a dedicated radar signal $s^{s}_n$ for the detected target. Then, the communication user and the sensing target can be collaboratively served by $N$ BSs, indexed by $n$, and each BS has \(M_{\rm{t}}\) transmit antennas and \(M_{\rm{r}}\) receive antennas. Let us assume that the transmitted radar signals $\{s_n^s\}_{n=1}^N$ of the BSs in the cooperative sensing cluster are approximately orthogonal for any time delay of interest by applying CDM only to the sensing signal \(s^s_n\). Accordingly, we spread it as $\tilde s^s_n = \mathbf c_n\,s^s_n$, where $\mathbf{c}_n^H \mathbf{c}_m=\delta_{nm}$, $\|\mathbf{c}_n\|^2=1$.

For notational simplicity, we continue to use \(s^s_n\) to represent the spread signal \(\mathbf{c}_n s^s_n\) in the remainder of the discussion. It is assumed that $\mathrm{E}[s^s_n (s^c_n)^H] = 0$, which is consistent with the assumptions in \cite{Liu2020JointTransmit, Hua2023Optimal, li2008mimo}.  Upon letting ${\mathbf{s}_n=\left[s^s_n, s^c_n\right]^T}$, we have $\mathrm{E}\left[\mathbf{s}_n \mathbf{s}_n^H\right]=\mathbf{I}_2$. To facilitate our analysis, when a BS is connected to multiple users and targets, we assign them to orthogonal time or frequency resource blocks so that, within a certain resource block, each BS serves only one user and one target. For the block designated to a specific target or region of interest, all BSs are scheduled to serve that target simultaneously, enabling cooperative multistatic sensing.\footnote{Nevertheless, our analytical framework based on this assumption may be readily extended to multi-users and multi-targets association to each time/frequency resource blocks unless they are closely located within serving area.} Then, the signal transmitted by the $n$th BS is given by
\vspace{0mm}
\begin{equation}\label{TrasmitSignals}
	{\mathbf{x}}_n = \mathbf{W}_n \mathbf{s}_n =  {\bf{w}}^c_n s^c_n +   {\bf{w}}^s_n s^s_n,
	\vspace{0mm}
\end{equation}
where ${\bf{w}}^c_n$ and ${\bf{w}}^s_n \in {\mathbb {C}}^{M_{\mathrm{t}} \times 1}$ are beamforming vectors, with $\|{\bf{w}}^c_n \|^2 = p^c$ and $\|{\bf{w}}^s_n\|^2 = p^s$. Here, $p^s$ and $p^c$ respectively represent the transmit power of the sensing and communication signals, and $\mathbf{W}_n=\left[\mathbf{w}^c_n, \mathbf{w}^s_n\right] \in {\mathbb{C}}^{M_{\mathrm{t}} \times 2}$ is the transmit precoding matrix of BS $n$. The location of BS $n$ can be represented by ${\bf{b}}_n = [x_n^b, y_n^b, z_n^b]^T \in \mathbb{R}^3, n = 1, 2, \dots, N$. To eliminate intra-BS interference from sensing to communications and simplify our analysis, zero-forcing (ZF) beamforming is employed. The beamforming matrix is constructed as
\vspace{0mm}
\begin{equation}\label{TransmitBeamforming}
	{\bf W}_n
	= \tilde{\bf W}_n
	\left(\sqrt{\mathrm{diag}\left(\tilde{\bf W}_n^H\tilde{\bf W}_n\right)}\right)^{-1}
	\mathrm{diag}\big(\sqrt{p^c},\sqrt{p^s}\big).
\end{equation}
where ${\tilde {\mathbf{W}}}_n = {{\bf{H}}^H_n}{\left( {\bf{H}}_n  {\bf{H}}_n^H \right)^{-1}}$ and $\mathbf{H}_n = 
\begin{bmatrix}
	\mathbf{h}_{n,c}^H \\[0.5ex]
	\mathbf{a}^H_{M_{\mathrm{t}}}(\Omega _n)
\end{bmatrix}
\;\in\;\mathbb{C}^{2\times M_t}$. Here, \(\mathbf{h}_{n,c}^H\in\mathbb{C}^{1 \times M_{\rm{t}}}\) represents the communication channel spanning from BS \(n\) to the served user, and \(\mathbf{a}^H(\Omega _n)\) corresponds to the sensing channel impinging from BS \(n\) to the target. Here, $\Omega_n$ denotes the spatial frequency associated with angles of arrival from BS $n$ to the target point ${\bf{t}} = [x^t, y^t, z^t]^T \in \mathbb{R}^3$ along the antenna array, for an uniform linear array aligned with the $x$-axis with spacing $\lambda/2$. With azimuth-elevation angles $(\varphi_n,\theta_n)$ from BS $n$ to the target, we have $\Omega_n=\pi\,\sin\theta_n\cos\varphi_n$. The spatial frequency $\Omega_n$ is the per-element phase increment induced by the direction $(\varphi_n,\theta_n)$ projected onto the array axis, i.e., $\mathbf{a}_{M_{\mathrm{t}}}(\Omega_n)=\big[1,\ e^{\mathrm{j}\Omega_n},\ \ldots,\ e^{\mathrm{j}(M_{\mathrm{t}}-1)\Omega_n}\big]^{\top}$. 
With the aid of ZF beamforming, intra-cell communication interference is minimized since all cooperating BSs serve the same communication user, while their sensing beams avoid that user.

\subsection{Cooperative Sensing Model}
\label{CooperativeSensing}
We aim to explore the optimal BS deployment method by examining time-of-flight based ranging measurements. The location of a certain target point is denoted as ${\bf{t}} \in \mathbb R^{3} $, where ${\bf{t}} \in \mathcal{A}$, and $\mathcal{A}$ represents the entire area of interest.
The base-band equivalent of the signal reflected from target point ${\bf{t}}$ at receiver $m$ is represented as 
\vspace{0mm}
\begin{equation}\label{SensingChannel}
	\begin{aligned}
		{{\bf{y}}_{m}}(\tau) \!& =  \!\sum\nolimits_{n = 1}^N \sigma \underbrace {{\| {\bf{t}} - {\bf{b}}_m\|^{ - \frac{\beta}{2} }}{\bf{a}}_{M_{\rm{r}}}( \Omega_{m} ){\| {\bf{t}} - {\bf{b}}_n\|^{ - \frac{\beta}{2} }}{{\bf{a}}}^H_{M_{\rm{t}}}( \Omega_{n} )}_{{\text{target channel}}}  \\
		& {{\bf{W}}_n}{{\bm{s}}_n}(\tau - \! \tau_{n,m}) \! + \!\underbrace {\sum\nolimits_{n \in \Phi_I} \! \! { {{\bf{H}}_{n,m}} \mathbf{W}_n \mathbf{s}_n(\tau \! - \! \tilde{\tau}_{n,m}) }}_{{\text{inter-cluster interference}}} \! +  {\bf{n}}(\tau),
	\end{aligned}
\end{equation}
where $\beta \ge 2$ is the pathloss exponent between the serving BS and the target.\footnote{For sensing performance analysis, we consider an open area of interest for target sensing where no blockage exists. This is because the radar sensing normally focuses on the light-of-sight (LoS) targets parameters and Non-LoS (NLoS) reflections may be neglected due to additional high pathloss of double-bouncing nature of radar.} Furthermore, $\sigma$ denotes the radar cross section (RCS), $\tau _{n,m}$ is the propagation delay of the bistatic link spanning from BS $n$ to the target and then to BS $m$, while $\tilde \tau _{n,m}$ denotes the propagation delay of the direct link impinging from BS $n$ to BS $m$. Recognizing that a link-invariant RCS is an idealization in multistatic settings, we adopt an average effective $\sigma$ for tractable deployment optimization, consistent with \cite{mishra2019toward}. In (\ref{SensingChannel}), ${\bf{H}}_{n,m}$ denotes the channel from BS $n$ to BS $m$. Finally, the term ${\bf{n}}(\tau)$ is the additive complex Gaussian noise having zero mean and covariance matrix $\sigma_s^2 {\bf{I}}_{M_{\rm{r}}}$.

Assuming unbiased estimations, the CRLB serves as a benchmark for theoretical localization accuracy in terms of the mean squared error (MSE), which can be expressed as
\vspace{0mm}
\begin{equation}	
	\mathbb E\left[\|\hat{\mathbf t}-\mathbf t\|^2\right]
	\ge \mathrm{Tr}\left(\mathbf F(\mathbf t)^{-1}\right)
	\triangleq \mathrm{CRLB}(\mathbf t),
	\vspace{0mm}
\end{equation}
where $\hat{{\bf{t}}}=\left[\hat{x}^t, \hat{y}^t, \hat{z}^t\right]^T$ represents the estimated location of the target. 
By applying matched filtering and reciprocal filtering, one can estimate the ToF of the signal between targets and BSs, and hence determine the propagation distance. Specifically, from transmitter $n$ to the target and then to receiver $m$, the term \( \hat d_{nm} \) denotes the bistatic range corresponding to the path BS-$n$ $\to$ target $\to$ BS-$m$, which is given by
\vspace{0mm}
\begin{equation}
	\hat d_{nm} = d_{nm} + n^t_{nm},
	\vspace{0mm}
\end{equation}
where $d_{nm} = \|\mathbf b_n-\mathbf t\|+\|\mathbf b_m-\mathbf t\|$ denotes the true distance, the measurement noise is $ n^t_{nm} \sim \mathcal{N}\left(0, \eta_{nm}^2\right)$, $\eta_{nm}^2 = \frac{ 3 c^2 \sigma_s^2  }{8 \pi^2 G_t M_r B^2  \gamma_{nm}}$, and $\gamma_{nm}
= \bar\sigma  \|\mathbf b_n-\mathbf t\|^{-\beta}\, \|\mathbf b_m-\mathbf t\|^{-\beta}$, the constant $\bar\sigma >0$ absorbs distance-invariant factors such as the average RCS. Here, $c$ denotes the speed of light, $B^2$ represents the effective squared bandwidth, $G_t$ is the transmit beamforming gain in the direction of the target, and $\gamma_{nm}$ represents the bistatic channel power.

Then, we transform $N^2$ time-of-flight measurement links into the target location. The Jacobian of the $N^2$ bistatic range measurements, evaluated at the true target position $\bf{t}$,\footnote{In our system, while we may not have the exact target location, the estimated or predicted location of the target gleaned from previous sensing results can be utilized for performance analysis, such as in target tracking.} can be expressed as
\[\mathbf J(\mathbf t)=
\begin{bmatrix}
	\left(\frac{(\mathbf t-\mathbf b_1)^T}{\|\mathbf t-\mathbf b_1\|}
	+\frac{(\mathbf t-\mathbf b_1)^T}{\|\mathbf t-\mathbf b_1\|}\right)\\
	\left(\frac{(\mathbf t-\mathbf b_1)^T}{\|\mathbf t-\mathbf b_1\|}
	+\frac{(\mathbf t-\mathbf b_2)^T}{\|\mathbf t-\mathbf b_2\|}\right)\\
	\vdots\\
	\left(\frac{(\mathbf t-\mathbf b_n)^T}{\|\mathbf t-\mathbf b_n\|}
	+\frac{(\mathbf t-\mathbf b_m)^T}{\|\mathbf t-\mathbf b_m\|}\right)\\
	\vdots\\
	\left(\frac{(\mathbf t-\mathbf b_N)^T}{\|\mathbf t-\mathbf b_N\|}
	+\frac{(\mathbf t-\mathbf b_N)^T}{\|\mathbf t-\mathbf b_N\|}\right)
\end{bmatrix}\in\mathbb R^{N^2\times 3}.\]
Then, the Fisher information matrix (FIM) of sensing performance related to the target point is given by
\vspace{0mm}
\begin{equation}\label{FisherMatrix}
	\mathbf {F}\left({\bf{t}}\right)=\mathbf {J}^{T}\mathbf {R}^{-1}\mathbf {J},
\vspace{0mm}
\end{equation}
where $
\mathbf R^{-1}=\mathrm{diag}\left(\eta_{nm}^{-2}\right)_{(n,m)}.
$
Since the transmit beamforming gain \( G_t \) in the direction of the target is influenced by the randomness of the communication user channels in the joint waveform design (c.f. (\ref{TransmitBeamforming})), we perform an expected CRLB analysis based on the channel statistics.  
Given the locations of the ISAC BSs, the expected CRLB at a specific target point \( \mathbf{t} \) can be given by
\vspace{0mm}
\begin{equation}
	\text{CRLB} ({\bf{t}}) = {\rm{E}}_{G_t} \left[\text{Tr}\left(\mathbf{F}\left({\bf{t}}\right)^{-1}\right)\right] .
\vspace{0mm}
\end{equation}
To evaluate the overall positioning performance across the entire area of interest, denoted by \( \mathcal{A} \), we define the expected \emph{area} CRLB (A-CRLB) as the average of \( \text{CRLB}(\mathbf{t}) \) over \( \mathcal{A} \), given by
\vspace{0mm}
\begin{equation}
	\overline{\text{CRLB}} =  \frac{1}{|\mathcal{A}|} \int_{ \bf{t}  \in {\mathcal{A}}} {{\rm{CRLB}}\left( \bf{t} \right)  d {\bf{t}}} , 
	\vspace{0mm}
\end{equation}
where $|\mathcal{A}|$ denotes the size (or volume) of the sensing area $\mathcal{A}$ in a $d$-dimensional space.  As summarized in Table \ref{tab:scaling-laws}, the {individual-target CRLB} is the pointwise CRLB on localization error for a single target. By contrast, the $d$-dimensional area CRLB averages $\mathrm{CRLB}(\mathbf t)$ over a region $A\subset\mathbb{R}^d$ to assess localization performance across the entire area.

\subsection{Cooperative Communication Model}

We assume that the transmitters use non-coherent joint transmission, where the useful signals are combined by accumulating the transmit signal powers of cooperative BSs, i.e., all cooperating BSs transmit the same communication signal, with $\mathbf{s}_n =\left[s^s_n, s^c\right]^T$. Here, ${\bf{u}} = [x^u, y^u, z^u]^T \in \mathbb{R}^3$ denotes the user location. The signal received at the typical user is then given by
\vspace{0mm}
	\begin{align}
		y_{c}=& \underbrace{\sum\nolimits_{n=1}^N \|{\bf{b}}_n - {\bf{u}} \|^{ - \frac{\alpha}{2} } \mathbf{h}_{n}^H \mathbf{W}_n \mathbf{s}_n}_{\text{collaborative intended signal}} \nonumber \\
		& +  \underbrace{\sum\nolimits_{{n \in \Phi_I}}\|{\bf{b}}_n - {\bf{u}} \|^{-\frac{\alpha}{2}} \mathbf{h}_{n}^H \mathbf{W}_n \mathbf{s}_n}_{\text{inter-cluster interference}} + {n_{c}},
		\vspace{0mm}
	\end{align}
where $\alpha \ge 2$ is the pathloss exponent, $\mathbf{h}^H_{n} \sim \mathcal{C N}\left(0, \mathbf{I}_{M_{\mathrm{t}}}\right)$ is the channel vector of the link between the BS at $\mathbf{b}_n$ and the communication user, $\Phi_I$ is the interfering BS set, and $n_{c}\in \mathcal{C N}(0,\sigma^2_c)$ denotes the noise. Since our goal is to optimize BS placement within a fixed local area, and external interference  cannot be readily controlled, we simplify performance assessment to the signal-to-noise ratio (SNR) of the cooperative cluster, treating the inter-cluster interference term as noise \cite{Park2016OptimalFeedback}.
The SNR of the received signal at the user ${\bf{u}}$ can be expressed as
\vspace{0mm}
 \begin{equation}\label{SNRexpression}
	{\rm{SNR}}_c = \frac{ {\sum\nolimits_{n = 1}^N {{g_n}{{\|{\bf{b}}_n - {\bf{u}} \|}^{ - \alpha }}} }}{\tilde \sigma_c^2},
	\vspace{0mm}
\end{equation}
where  $\left|\mathbf{h}_{n}^H \mathbf{w}_n^c\right|^2$ denotes the effective desired signals' channel gain, and $\tilde \sigma_c^2$ denotes the effective noise variance, including interference.
The average data rate of users is given by 
\vspace{0mm}
\begin{equation}
	R_c({\bf{u}}) = \mathrm{E}_{g_n}[\log_2 (1+\mathrm{SNR}_c)].
	\vspace{0mm}
\end{equation}
We then define the \emph{area} communication rate over the communication user area $\mathcal B$ by $\frac{1}{{|\mathcal{B}|}} \int_{\bf{u} \in {\mathcal{B}}} {{R_c}\left( \bf{u} \right) {\rm{d}}\bf{u}}$. 
Similarly, 
$|\mathcal{B}|$ denotes the size (or volume) of the communication user area ${\mathcal{B}}$.

\subsection{Problem Formulation}

We aim for minimizing the expected localization error over the area of interest while satisfying the communication rate requirement. In practice, although the exact user and target locations are unknown a priori, this assumption enables us to characterize the worst-case service performance across the entire area, reflecting practical deployment requirements.  Therefore,
the problem can be formulated as
\begin{alignat}{2}
	\label{P1}
	(\rm{P1}): & \begin{array}{*{20}{c}}
		\mathop {\min }\limits_{\{{\bf{b}}_n\}_{n=1}^N} \quad  \frac{1}{|{\cal{A}}|}  \int_{ \bf{t}  \in {\cal{A}}} {{\rm{CRLB}}\left( \bf{t} \right)  {\rm{d}}\bf{t}} 
	\end{array} & \\ 
	\mbox{s.t.}\quad
	& \frac{1}{|{\cal{B}}|} \int_{\bf{u} \in {\cal{B}}} {{R_c}\left( \bf{u} \right) {\rm{d}}\bf{u}} \ge R^{th}, & \tag{\ref{P1}a}
\end{alignat}
where \({\bf{b}}_n\) denotes the position of the \(n\)th BS.\footnote{While our study treats static BS placement, the framework naturally extends to mobile BSs (e.g., drones) by re-optimizing time-varying locations on a receding horizon using online user and target trajectory estimates, subject to mobility and energy constraints, to minimize a time-averaged area-CRLB.}  

\begin{remark}
In contrast to communication sum rate maximization, where the sum rate can theoretically grow unbounded as the SNR increases, fairness is difficult to ensure. The system tends to allocate more resources to users having better channel conditions, potentially leaving users with poor channels underserved or even entirely ignored. However, since the CRLB is strictly positive, minimizing the area CRLB can be heavily influenced by sub-regions having high CRLB values, even if the CRLB is near zero across most of the sensing area. This characteristic makes the optimization highly sensitive to regions with poor performance, compelling BS deployment to prioritize these regions for improvement. Consequently, minimizing the area CRLB is inherently sensitive to worst-case locations. Explicitly, if a sub-region's CRLB tends to infinity, so does the objective, forcing BS deployment to eliminate worse sensing points and thus yielding a naturally balanced performance over the area of interest without an explicit upper-bound constraint of $
		\max_{\mathbf{t}_k}\mathrm{CRLB}(\mathbf{t}_k)\le\mathrm{CRLB}^{\mathrm{th}}
		$. 
\end{remark}

\section{Optimal Sensing Deployment Analysis}
\label{DeploymentAnalysis}
While the communication-only deployment has been well studied in the past decades, we have to understand the effects of BS deployment on the area sensing performance before designing the joint ISAC deployment. In this section, we first prove several important characteristics of an optimal deployment strategy, to reveal the unique scaling law of sensing coverage quality and facilitate solving the problem.

\subsection{Invariance Characteristics of Optimal Deployment}

In this section, we first characterize the optimal BS deployment from a sensing performance perspective by first ignoring any communication constraints. Specifically, the sensing-optimal deployment problem can be formulated as follows:
\vspace{0mm}
\begin{alignat}{2}
	\label{P1.1}
	(\rm{P1.1}): & \begin{array}{*{20}{c}}
		\mathop {\min }\limits_{\{{\bf{b}}_n\}_{n=1}^N} \quad  \frac{1}{|{\cal{A}}|}\int_{ \bf{t}  \in {\cal{A}}} {{\rm{CRLB}}\left( \bf{t} \right) {\rm{d}}\bf{t}} .
	\end{array} 
\end{alignat}
Building on this optimality formulation, we demonstrate that the area CRLB is independent of any displacement, rotation, and symmetric projection under optimal BS deployment, as shown in Fig. \ref{figure2}, based on which, we present a streamlined design procedure for sensing networks.

\begin{Pro}\label{TranslationInvariance}
	{\textbf{Invariance to displacement.}} Let \(\mathcal{A}\subset\mathbb{R}^d\) be the sensing area in problem \((\mathrm{P1.1})\), and let $
	\{{\bf{b}}_n^*\}_{n=1}^N$
	be an optimal BS deployment attaining the minimum $\overline{\text{CRLB}}$
	$=
	E^*$.
	For any displacement vector \(\boldsymbol\Delta\in\mathbb{R}^d\), define the shifted area
	$
	\mathcal{A}_{\boldsymbol\Delta}
	=\mathcal{A}+\boldsymbol\Delta$.
	Then,
	$
	\{{\bf{b}}_n^* + \boldsymbol\Delta\}_{n=1}^N
	$
	is an optimal solution of \((\mathrm{P1.1})\) posed over \(\mathcal{A}_{\boldsymbol\Delta}\), achieving the same objective value \(E^*\). 
\end{Pro}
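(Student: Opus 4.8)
The plan is to show that every ingredient of $\mathrm{CRLB}(\mathbf{t})$ depends on the BS positions and the target only through the difference vectors $\mathbf{b}_n - \mathbf{t}$, and then to transfer the objective value under a joint shift via a volume-preserving change of variables. First I would inspect the Jacobian $\mathbf{J}$ and the matrix $\mathbf{R}^{-1}$ feeding the FIM in (\ref{FinsherMatrix}): each block entry is assembled from the unit vectors $\tfrac{(\mathbf{b}_n - \mathbf{t})}{\|\mathbf{b}_n - \mathbf{t}\|}$ and the ranges $\|\mathbf{b}_n - \mathbf{t}\|$, so both matrices are functions of the relative vectors $\{\mathbf{b}_n - \mathbf{t}\}$ alone. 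The spatial frequencies $\Omega_n$, and hence the beamforming gain $G_t$ entering the expected CRLB, likewise depend only on the directions from the BSs to the target, which are unchanged by a common translation. Replacing $\mathbf{b}_n \mapsto \mathbf{b}_n + \boldsymbol\Delta$ and $\mathbf{t} \mapsto \mathbf{t} + \boldsymbol\Delta$ therefore leaves every $\mathbf{b}_n - \mathbf{t}$ — and with it $\mathbf{F}(\mathbf{t})$ and $\mathrm{CRLB}(\mathbf{t})$ — invariant.

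Second, with this pointwise invariance in hand, I would evaluate the shifted objective directly. Writing the A-CRLB over $\mathcal{A}_{\boldsymbol\Delta}$ with BSs at $\{\mathbf{b}_n^* + \boldsymbol\Delta\}$ and substituting $\mathbf{t}' = \mathbf{t} + \boldsymbol\Delta$, the translation has unit Jacobian, so $\mathrm{d}\mathbf{t}' = \mathrm{d}\mathbf{t}$ and $|\mathcal{A}_{\boldsymbol\Delta}| = |\mathcal{A}|$, while the integrand $\mathrm{CRLB}(\mathbf{t}')$ evaluated at the shifted BSs equals $\mathrm{CRLB}(\mathbf{t})$ at the original BSs by the first step. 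The integral collapses exactly onto the original A-CRLB, yielding objective value $E^*$.

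Finally, to upgrade equality-of-value into optimality, I would exploit that the shift is a bijection between feasible deployments: the map $\{\mathbf{b}_n\} \mapsto \{\mathbf{b}_n + \boldsymbol\Delta\}$ carries deployments for $\mathcal{A}$ to deployments for $\mathcal{A}_{\boldsymbol\Delta}$ and preserves the A-CRLB in both directions. Arguing by contradiction, if some deployment on $\mathcal{A}_{\boldsymbol\Delta}$ achieved an A-CRLB strictly below $E^*$, then shifting it back by $-\boldsymbol\Delta$ would produce a deployment on $\mathcal{A}$ beating $E^*$, contradicting the optimality of $\{\mathbf{b}_n^*\}$. Hence $\{\mathbf{b}_n^* + \boldsymbol\Delta\}$ attains the minimum $E^*$ over $\mathcal{A}_{\boldsymbol\Delta}$.

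I expect the main obstacle to be the bookkeeping in the first step: one must confirm that the angle-dependent quantities (the $\Omega_n$ and the induced $G_t$) are genuinely translation-invariant, and that no term hidden in the CRLB secretly depends on absolute coordinates — for instance through the measurement variances $\eta_{ij}^2$ or the RCS $\sigma$ — since a single such dependence would invalidate the reduction. Once the relative-vector dependence is verified, the remaining steps are a routine change of variables and a standard bijection argument.
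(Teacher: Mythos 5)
Your proposal is correct and takes essentially the same route as the paper's proof: pointwise translation-invariance of $\mathrm{CRLB}(\mathbf{t})$ via the difference vectors $\mathbf{b}_n - \mathbf{t}$ in $\mathbf{J}$ and $\mathbf{R}^{-1}$, followed by the identical shift-back contradiction argument to upgrade equal objective value to optimality on $\mathcal{A}_{\boldsymbol\Delta}$. Your extra bookkeeping --- the unit-Jacobian change of variables with $|\mathcal{A}_{\boldsymbol\Delta}| = |\mathcal{A}|$, and the explicit check that $\Omega_n$, $G_t$, $\eta_{ij}^2$ and $\sigma$ harbor no absolute-coordinate dependence --- simply makes rigorous what the paper states implicitly.
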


\begin{proof}
	The performance metric CRLB in (\ref{FisherMatrix}) depends only on the relative positions between BSs and target points, i.e., $\{{\bf{b}}_n - \mathbf{t}\}$. As shown in Fig. \ref{figure2}(a), after displacing all BSs by \( \mathbf{\Delta} \), the new positions are  ${\bf{b}}_n' = {\bf{b}}_n + \mathbf{\Delta}$ and $\mathbf{t}' = \mathbf{t} + \mathbf{\Delta}$. For any BS \( n \) and target point $\mathbf{t}$,  we have
	\vspace{0mm}
	\begin{equation}
	{\bf{b}}_n' - \mathbf{t}' = ({\bf{b}}_n + \mathbf{\Delta}) - (\mathbf{t} + \mathbf{\Delta}) = {\bf{b}}_n - \mathbf{t}.
	\vspace{0mm}
	\end{equation}  
	Thus, relative positions remain unchanged, and the performance metric is displacement-invariant, i.e., $\mathrm{CRLB}(\mathbf{t}+\boldsymbol\Delta)\;=\;\mathrm{CRLB}(\mathbf{t})$. If \( {\bf{b}}_n^* \) is optimal for area \( \mathcal{A} \), then \( {\bf{b}}_n^* + \mathbf{\Delta} \) also achieves the same optimal metric \( E^* \). By the method of contradiction, suppose $
	\{{\bf{b}}_n^* + \boldsymbol\Delta\}_{n=1}^N
	$ were not optimal on $\mathcal A_{\boldsymbol\Delta}$, i.e., a certain deployment achieves $E'<E^*$ there; then displacing it back by $-\boldsymbol\Delta$ yields $E'<E^*$ on $\mathcal A$, contradicting the optimality of $
	\{{\bf{b}}_n^*\}_{n=1}^N$.
\end{proof}

\begin{figure}[t]
	\centering
	\includegraphics[width=8.4cm]{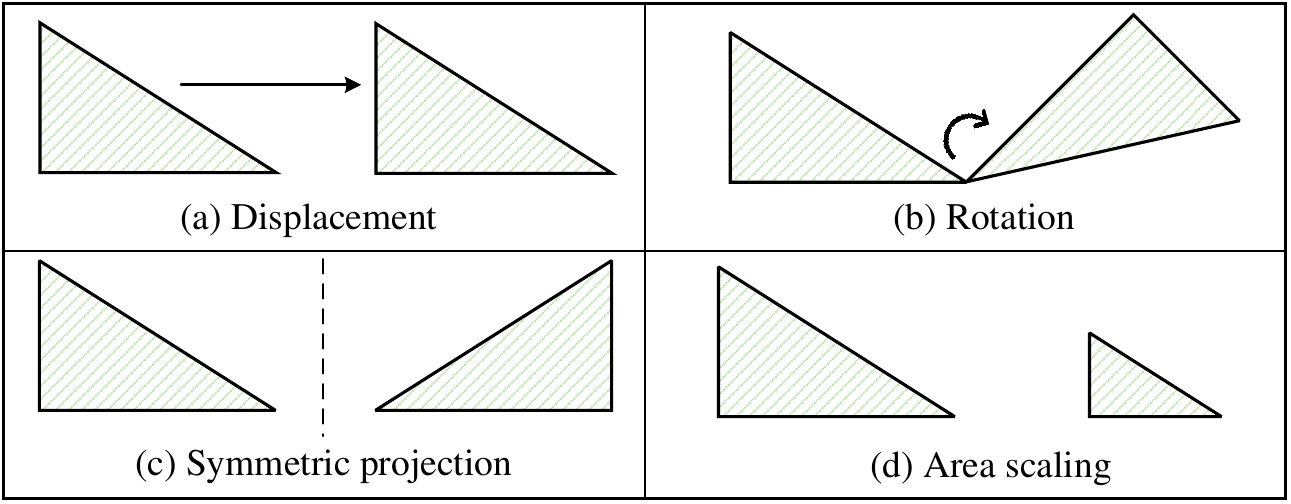}
	\vspace{0mm}
	\caption{Illustration of invariance and similarity.}
	\label{figure2}
\end{figure}

\begin{Pro}\label{RotationInvariance}
	{\textbf{Invariance to Rotation}}. Let \(\mathcal{A}\subset\mathbb{R}^d\) be the sensing area in problem \((\mathrm{P1.1})\).  Suppose
	$
	\{{\bf{b}}_n^*\}_{n=1}^N
	$
	is an optimal BS deployment for \((\mathrm{P1.1})\), attaining the minimum $\overline{\text{CRLB}}$
	$=
	E^*$.
	 Let \({\bf{R}}_{\varphi}\in\mathbb{R}^{d \times d}\) be any arbitrary rotation matrix, i.e.
	$
	{\bf{R}}_{\varphi}{\bf{R}}_{\varphi}^T = \mathbf{I}, 
	\det({\bf{R}}_{\varphi})=1,
	$
	and define the rotated regions
	$
	\mathcal{A}_{{\bf{R}}_{\varphi}} ={\bf{R}}_{\varphi} \mathcal{A}$.
	Then
	$
	\{{\bf{R}}_{\varphi}\,{\bf{b}}_n^*\}_{n=1}^N
	$
	is an optimal solution of \((\mathrm{P1.1})\) over \(\mathcal{A}_{{\bf{R}}_{\varphi}}\), achieving the same objective value \(E^*\).
\end{Pro}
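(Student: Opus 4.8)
The plan is to mirror the displacement-invariance argument of Proposition \ref{TranslationInvariance}, with the one essential addition that the Fisher information matrix is not left literally unchanged but is instead conjugated by the orthogonal map \({\bf{R}}_\varphi\). First I would fix a target point \(\mathbf{t}\in\mathcal{A}\) together with its rotated image \(\mathbf{t}'={\bf{R}}_\varphi\mathbf{t}\in\mathcal{A}_{{\bf{R}}_\varphi}\), and examine the two ingredients of \(\text{CRLB}(\mathbf{t})\): the Jacobian \(\mathbf{J}\) and the weighting matrix \(\mathbf{R}\). Since \({\bf{R}}_\varphi\) is orthogonal it preserves Euclidean norms, so every range satisfies \(\|{\bf{R}}_\varphi{\bf{b}}_n-{\bf{R}}_\varphi\mathbf{t}\|=\|{\bf{b}}_n-\mathbf{t}\|\); hence the diagonal matrix \(\mathbf{R}\), which depends on the BS–target distances only, is identical before and after rotation.

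Next I would track the Jacobian. Each of its rows is a unit direction vector \(({\bf{b}}_n-\mathbf{t})/\|{\bf{b}}_n-\mathbf{t}\|\) transposed, and under rotation it maps to \(({\bf{R}}_\varphi({\bf{b}}_n-\mathbf{t}))^T/\|{\bf{b}}_n-\mathbf{t}\|\), so the rotated Jacobian is \(\mathbf{J}'=\mathbf{J}\,{\bf{R}}_\varphi^T\). Substituting into (\ref{FinsherMatrix}) yields \(\mathbf{F}'=(\mathbf{J}')^T\mathbf{R}^{-1}\mathbf{J}'={\bf{R}}_\varphi\,\mathbf{F}\,{\bf{R}}_\varphi^T\), i.e.\ the FIM is conjugated by the rotation. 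Inverting and using orthogonality gives \((\mathbf{F}')^{-1}={\bf{R}}_\varphi\,\mathbf{F}^{-1}\,{\bf{R}}_\varphi^T\), and the cyclic property of the trace together with \({\bf{R}}_\varphi^T{\bf{R}}_\varphi=\mathbf{I}\) delivers \(\text{trace}((\mathbf{F}')^{-1})=\text{trace}(\mathbf{F}^{-1})\). I would additionally note that the expectation over \(G_t\) is unaffected, because the channel statistics are isotropic and therefore invariant to a coordinate rotation; this is what secures invariance of the expected, rather than merely instantaneous, CRLB. Together these steps establish the pointwise identity \(\text{CRLB}(\mathbf{t}')=\text{CRLB}(\mathbf{t})\) under simultaneous rotation of the BSs and the target.

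To lift the pointwise identity to equality of the area objective, I would apply the change of variables \(\mathbf{t}'={\bf{R}}_\varphi\mathbf{t}\) in the integral defining \(\overline{\text{CRLB}}\) over \(\mathcal{A}_{{\bf{R}}_\varphi}\). Since \(|\det({\bf{R}}_\varphi)|=1\), the volume element is preserved, \(d\mathbf{t}'=d\mathbf{t}\), and \(|\mathcal{A}_{{\bf{R}}_\varphi}|=|\mathcal{A}|\); hence the rotated deployment \(\{{\bf{R}}_\varphi{\bf{b}}_n^*\}\) attains exactly \(E^*\) on \(\mathcal{A}_{{\bf{R}}_\varphi}\). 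I would then close with the contradiction device used in Proposition \ref{TranslationInvariance}: were \(\{{\bf{R}}_\varphi{\bf{b}}_n^*\}\) suboptimal on \(\mathcal{A}_{{\bf{R}}_\varphi}\), rotating the purportedly better layout back by \({\bf{R}}_\varphi^T\) would beat \(E^*\) on \(\mathcal{A}\), contradicting the optimality of \(\{{\bf{b}}_n^*\}\).

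I expect the main obstacle to be the algebraic bookkeeping of the FIM transformation, specifically verifying that the conjugation \(\mathbf{F}\mapsto{\bf{R}}_\varphi\mathbf{F}{\bf{R}}_\varphi^T\) truly cancels out of the trace-of-inverse objective, and confirming that the \(G_t\)-expectation is genuinely rotation-invariant and not merely distance-invariant. The norm-preservation and volume-preservation facts follow immediately from orthogonality, so the crux of the argument is this trace/orthogonality identity; the remaining structure parallels the displacement proof.
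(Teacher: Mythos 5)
Your proof is correct and follows essentially the same route as the paper: pointwise invariance of \(\mathrm{CRLB}\) under simultaneous rotation of BSs and target, a measure-preserving change of variables (using \(|\det(\mathbf{R}_{\varphi})|=1\)) to equate the area objectives, and the back-rotation contradiction to establish optimality. If anything, you are more careful than the paper's own proof of this proposition, which infers \(\mathrm{CRLB}(\mathbf{R}_{\varphi}\mathbf{t})=\mathrm{CRLB}(\mathbf{t})\) directly from distance preservation; the explicit Jacobian transformation \(\mathbf{J}'=\mathbf{J}\,\mathbf{R}_{\varphi}^T\), the conjugation \(\mathbf{F}'=\mathbf{R}_{\varphi}\mathbf{F}\mathbf{R}_{\varphi}^T\), and the trace-cyclicity step that you supply are precisely the details the paper defers to its proof of Proposition \ref{MirrorSymmetricRegion}.
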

\begin{proof}
	For any ${\bf{b}}, \mathbf{t}\in\mathbb{R}^d$,
		$
		\|{\bf{R}}_{\varphi}{\bf{b}} - {\bf{R}}_{\varphi}\mathbf{t}\|
		= \|{\bf{b}}-\mathbf{t}\|,
		$
		we have
		$
		\mathrm{CRLB}({\bf{R}}_{\varphi}\mathbf{t}) = \mathrm{CRLB}(\mathbf{t}).
		$
		Therefore, it follows that 
		$
		\frac{1}{|\mathcal{A}_{{\bf{R}}_{\varphi}}|}\int_{\mathcal{A}_{{\bf{R}}_{\varphi}}} \mathrm{CRLB}(\mathbf{x})\mathrm{d}\mathbf{x}
		= \frac{1}{|\mathcal{A}|} \int_{\mathcal{A}} \mathrm{CRLB}(\mathbf{t})\mathrm{d}\mathbf{t} = E^*.
		$
		If a deployment on $\mathcal{A}_{{\bf{R}}_{\varphi}}$ achieved an objective less than $E^*$, rotating it back by ${\bf{R}}_{\varphi}^T$ would yield a deployment on $\mathcal{A}$ having an objective below $E^*$, contradicting the optimality of $\{{\bf{b}}_n^*\}$. Hence $\{{\bf{b}}_n'\}$ is optimal on $\mathcal{A}_{{\bf{R}}_{\varphi}}$. Similar to the proof in Proposition \ref{TranslationInvariance}, it can be shown that $
		\{{\bf{R}}_{\varphi}\,{\bf{b}}_n^*\}_{n=1}^N
		$
		is an optimal solution of \((\mathrm{P1.1})\) over \(\mathcal{A}_{{\bf{R}}_{\varphi}}\).
\end{proof}

According to Proposition~\ref{RotationInvariance}, let \(\mathcal{R}\) be a region that is invariant under a rotation by angle \(\varphi\), i.e.,  
$
\mathcal{R} = \mathcal{R}\,{\bf{R}}_{\varphi},
$
where \({\bf{R}}_{\varphi}\) is the \(\varphi\)-rotation matrix. If \(\mathbf{b}^*\) is an optimal deployment in \(\mathcal{R}\), then for each \(i=1,\dots,k-1\) the rotated configuration
$
R_{\varphi}^i\,\mathbf{b}^*
$
is also optimal. In particular, if \(\mathcal{R}\) has \(k\)-fold rotational symmetry, there exist \(k\) distinct yet equivalent optimal solutions.

\begin{Pro}\label{MirrorSymmetricRegion}
	{\textbf{Invariance under Symmetric Projection}}. Let \(\mathcal{A}\subset\mathbb{R}^d\) be the sensing region in problem \((\mathrm{P1.1})\).  Let
	$
	\mathbf{O}_H \in \mathbb{R}^{d\times d},\;
	\mathbf{O}_H\mathbf{O}_H^T = \mathbf{I},\;\det(\mathbf{O}_H)=-1,
	$
	be the symmetric projection matrix across some hyperplane \(H\), and define its symmetric projection region
	$
	\mathcal{A}' \;=\; \mathbf{O}_H\,\mathcal{A}.
	$
	If \(\{\mathbf{b}_n^*\}_{n=1}^N\) is an optimal BS deployment on \(\mathcal{A}\) achieving
	\(\overline{\mathrm{CRLB}} = E^*\), then the reflected deployment
	\(\{\mathbf{O}_H\,\mathbf{b}_n^*\}_{n=1}^N\)
	is optimal on \(\mathcal{A}'\), achieving the same
	\(\overline{\mathrm{CRLB}} = E^*\).
\end{Pro}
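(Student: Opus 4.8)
The plan is to follow the same two-stage template used in Propositions~\ref{TranslationInvariance} and \ref{RotationInvariance}: first establish the pointwise invariance $\mathrm{CRLB}(\mathbf{O}_H\mathbf{t})=\mathrm{CRLB}(\mathbf{t})$ of the reflected deployment, then lift this to the area objective through a change of variables, and finally close with a contradiction argument on optimality. The distinctive feature here, relative to rotation, is that $\det(\mathbf{O}_H)=-1$, so the crux of the argument is to verify that every quantity entering the objective depends on $\mathbf{O}_H$ only through its orthogonality, never through the sign of its determinant.

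First I would show that the reflection preserves all the geometric data feeding the FIM in (\ref{FinsherMatrix}). Because $\mathbf{O}_H\mathbf{O}_H^T=\mathbf{I}$, the map is an isometry, so $\|\mathbf{O}_H\mathbf{b}_n^*-\mathbf{O}_H\mathbf{t}\|=\|\mathbf{b}_n^*-\mathbf{t}\|$ for every $n$; consequently each distance $\|\mathbf{b}_n-\mathbf{t}\|$, and hence the entire diagonal matrix $\mathbf{R}$, is unchanged. The $n$th row of the Jacobian, namely $\frac{(\mathbf{b}_n-\mathbf{t})^T}{\|\mathbf{b}_n-\mathbf{t}\|}$, is sent to $\frac{(\mathbf{b}_n-\mathbf{t})^T\mathbf{O}_H^T}{\|\mathbf{b}_n-\mathbf{t}\|}$, so the reflected Jacobian is $\mathbf{J}'=\mathbf{J}\mathbf{O}_H^T$. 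Substituting into (\ref{FinsherMatrix}) gives $\mathbf{F}'=(\mathbf{J}')^T\mathbf{R}^{-1}\mathbf{J}'=\mathbf{O}_H\,\mathbf{F}\,\mathbf{O}_H^T$, i.e.\ an orthogonal congruence of the original FIM.

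The key computation is then the CRLB invariance. Using $(\mathbf{O}_H\mathbf{F}\mathbf{O}_H^T)^{-1}=\mathbf{O}_H\mathbf{F}^{-1}\mathbf{O}_H^T$ together with the cyclic property of the trace and $\mathbf{O}_H^T\mathbf{O}_H=\mathbf{I}$, we obtain $\mathrm{trace}((\mathbf{F}')^{-1})=\mathrm{trace}(\mathbf{O}_H\mathbf{F}^{-1}\mathbf{O}_H^T)=\mathrm{trace}(\mathbf{F}^{-1})$, and since this holds for each realization of $G_t$ it survives the expectation, yielding $\mathrm{CRLB}(\mathbf{O}_H\mathbf{t})=\mathrm{CRLB}(\mathbf{t})$. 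For the area objective I would apply the substitution $\mathbf{x}=\mathbf{O}_H\mathbf{t}$ over $\mathcal{A}'=\mathbf{O}_H\mathcal{A}$; because $|\det\mathbf{O}_H|=1$ we have $|\mathcal{A}'|=|\mathcal{A}|$ and $\mathrm{d}\mathbf{x}=\mathrm{d}\mathbf{t}$, whence $\frac{1}{|\mathcal{A}'|}\int_{\mathcal{A}'}\mathrm{CRLB}(\mathbf{x})\,\mathrm{d}\mathbf{x}=\frac{1}{|\mathcal{A}|}\int_{\mathcal{A}}\mathrm{CRLB}(\mathbf{t})\,\mathrm{d}\mathbf{t}=E^*$. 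Optimality follows by contradiction as in the earlier proofs: a hyperplane reflection satisfies $\mathbf{O}_H=\mathbf{O}_H^T$ and $\mathbf{O}_H^2=\mathbf{I}$, so any deployment achieving an objective below $E^*$ on $\mathcal{A}'$ could be reflected back by $\mathbf{O}_H$ to beat $E^*$ on $\mathcal{A}$, contradicting the optimality of $\{\mathbf{b}_n^*\}_{n=1}^N$.

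The only genuinely new point to guard against is the negative determinant, and I expect it to be the main obstacle only in the sense of requiring a clean argument that it is harmless. It never surfaces in the objective: the distances enter through the isometry property $\mathbf{O}_H^T\mathbf{O}_H=\mathbf{I}$, the CRLB enters through a trace of an orthogonally congruent inverse, and the integral transforms through the absolute Jacobian $|\det\mathbf{O}_H|=1$, each of which is insensitive to $\mathrm{sgn}(\det\mathbf{O}_H)$. Hence, for the purposes of $(\mathrm{P1.1})$, a reflection is indistinguishable from a rotation, and the stated invariance follows.
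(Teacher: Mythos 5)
Your proposal is correct and follows essentially the same route as the paper's own proof: the Jacobian transforms as $\mathbf{J}'=\mathbf{J}\,\mathbf{O}_H^T$, the isometry property leaves $\mathbf{R}^{-1}$ unchanged, the FIM undergoes the orthogonal congruence $\mathbf{F}'=\mathbf{O}_H\,\mathbf{F}\,\mathbf{O}_H^T$ whose trace-invariant inverse gives pointwise CRLB invariance, and optimality closes by the same reflect-back contradiction. Your added explicit details (the $|\det\mathbf{O}_H|=1$ change of variables for the area integral and the involution $\mathbf{O}_H^2=\mathbf{I}$ in the contradiction step) are sound refinements of what the paper leaves implicit, not a different argument.
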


\begin{proof}
	Let \(\mathbf{t}\in\mathcal{A}\) be any target point and \(\mathbf{t}' = \mathbf{O}_H\,\mathbf{t}\) belong to its symmetric projection area.  Feasibility holds, since \(\mathbf{t} \in \mathcal{A}\implies \mathbf{O}_H \mathbf{t}\ \in \mathcal{A}'\).  
	Under symmetric projection, each row of the Jacobian transforms as
	\vspace{0mm}
	\begin{equation}
		\frac{\bigl(\mathbf{O}_H\,\mathbf{b}_n^* - \mathbf{O}_H\,\mathbf{t}\bigr)^T}
		{\bigl\|\mathbf{O}_H\,\mathbf{b}_n^* - \mathbf{O}_H\,\mathbf{t}\bigr\|}
		=
		\frac{(\mathbf{b}_n^* - \mathbf{t})^T\,\mathbf{O}_H^T}
		{\|\mathbf{b}_n^* - \mathbf{t}\|},
		\vspace{0mm}
	\end{equation}
	so 
	\(\mathbf{J}(\mathbf{t}') = \mathbf{J}(\mathbf{t})\,\mathbf{O}_H^T\).  
	Since \(\|\mathbf{O}_H {\bf{x}} - \mathbf{O}_H {\bf{y}}\| = \|{\bf{x}} - {\bf{y}}\|\), the precision-noise matrix \(\mathbf{R}^{-1}\) is unchanged.  Hence the Fisher information transforms as
	\vspace{0mm}
	\begin{equation}
		\mathbf{F}(\mathbf{t}')
		= (\mathbf{O}_H\,\mathbf{J}^T)\,\mathbf{R}^{-1}\,(\mathbf{J}\,\mathbf{O}_H^T)
		= \mathbf{O}_H\,\mathbf{F}(\mathbf{t})\,\mathbf{O}_H^T,
		\vspace{0mm}
	\end{equation}
	and therefore
	\((\mathbf{F}(\mathbf{t}'))^{-1}
	= \mathbf{O}_H\,\mathbf{F}(\mathbf{t})^{-1}\,\mathbf{O}_H^T\).  
	By the cyclic property of the trace, we have
	\(\mathrm{Tr}\bigl((\mathbf{F}(\mathbf{t}'))^{-1}\bigr)
	= \mathrm{Tr}\bigl(\mathbf{F}(\mathbf{t})^{-1}\bigr)\),
	so the CRLB is invariant. Similar to the contradiction-based proof in Proposition \ref{TranslationInvariance}, it can be shown that the reflected deployment attains the same minimum $\overline{\text{CRLB}} = E^*$ and it is optimal.
\end{proof}

The foregoing analysis has identified a set of invariance properties characterizing optimal BS deployments over a fixed‐area region, which can facilitate low-complexity algorithm design, as detailed in Section \ref{LowComplixity}. In the next section, we extend this framework to variable‐area scenarios and demonstrate how those invariances yield simple scaling laws that, in turn, lead to low‐complexity algorithms for designing deployments that achieve optimal sensing coverage.

\subsection{Scaling Law Analysis}

In this section, we now rigorously examine how these optimal configurations behave under geometric transformations, deriving the scaling law vs. the number of BSs for a specific area CRLB value.

\begin{Pro}\label{ScaleInvariance}
{\textbf{Uniform Area Scaling}}. Let \(\mathcal{A}\subset\mathbb{R}^d\) be a region whose optimal BS deployment \(\{\mathbf{b}_n^*\}_{n=1}^N\) attains $\overline{\text{CRLB}} = $ \(E^*\).  If we scale the entire region by \(\kappa>0\), i.e.\ replace \(\mathcal{A}\) with \(\kappa\,\mathcal{A}=\{\kappa\,\mathbf{x}:\mathbf{x}\in\mathcal{A}\}\), then the new optimal deployment is \(\{\kappa\, \mathbf{b}_n^*\}_{n=1}^N\)  and the optimal $\overline{\text{CRLB}}$ becomes \(\kappa^{2\beta}E^*\).
\end{Pro}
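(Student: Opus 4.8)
The plan is to reduce everything to a pointwise scaling identity for the CRLB and then push that identity through the area average, finishing with a contradiction argument for optimality exactly as in Propositions~\ref{TranslationInvariance}--\ref{MirrorSymmetricRegion}. The engine is the explicit factorization $\mathbf{F}(\mathbf{t})=\mathbf{J}^T\mathbf{R}^{-1}\mathbf{J}$ in (\ref{FinsherMatrix}), in which the two factors respond to a dilation in completely separate ways, so the whole computation hinges on tracking the powers of $\kappa$ carried by each.

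First I would examine how $\mathbf{J}$ and $\mathbf{R}^{-1}$ transform under the simultaneous replacement $\mathbf{b}_n\mapsto\kappa\mathbf{b}_n$, $\mathbf{t}\mapsto\kappa\mathbf{t}$. Each row of $\mathbf{J}$ is assembled from unit direction vectors $\frac{\mathbf{b}_n-\mathbf{t}}{\|\mathbf{b}_n-\mathbf{t}\|}$; since $\frac{\kappa(\mathbf{b}_n-\mathbf{t})}{\|\kappa(\mathbf{b}_n-\mathbf{t})\|}=\frac{\mathbf{b}_n-\mathbf{t}}{\|\mathbf{b}_n-\mathbf{t}\|}$, the Jacobian is scale-invariant, $\mathbf{J}(\kappa\mathbf{t})=\mathbf{J}(\mathbf{t})$. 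By contrast, every diagonal entry of $\mathbf{R}^{-1}$ is an inverse distance raised to the power $2\beta$ (e.g.\ $\|\mathbf{b}_n-\mathbf{t}\|^{-2\beta}$), so it acquires exactly one factor $\kappa^{-2\beta}$, giving $\mathbf{R}^{-1}(\kappa\mathbf{t})=\kappa^{-2\beta}\mathbf{R}^{-1}(\mathbf{t})$. Substituting both into the factorization yields $\mathbf{F}(\kappa\mathbf{t})=\kappa^{-2\beta}\mathbf{F}(\mathbf{t})$, hence $\mathbf{F}(\kappa\mathbf{t})^{-1}=\kappa^{2\beta}\mathbf{F}(\mathbf{t})^{-1}$ and $\mathrm{trace}(\mathbf{F}(\kappa\mathbf{t})^{-1})=\kappa^{2\beta}\,\mathrm{trace}(\mathbf{F}(\mathbf{t})^{-1})$. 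Because the factor $\kappa^{2\beta}$ is deterministic and the beamforming gain $G_t$ depends only on (scale-invariant) angles, it commutes with the expectation $\mathrm{E}_{G_t}[\cdot]$, so the pointwise identity $\mathrm{CRLB}(\kappa\mathbf{t})=\kappa^{2\beta}\,\mathrm{CRLB}(\mathbf{t})$ holds for the expected CRLB as well.

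Next I would lift this to the area average. Writing the objective over $\kappa\mathcal{A}$ and performing the change of variables $\mathbf{t}'=\kappa\mathbf{t}$ (so $\mathrm{d}\mathbf{t}'=\kappa^{d}\,\mathrm{d}\mathbf{t}$) together with $|\kappa\mathcal{A}|=\kappa^{d}|\mathcal{A}|$, both $\kappa^{d}$ factors cancel and only the pointwise $\kappa^{2\beta}$ survives, giving $\frac{1}{|\kappa\mathcal{A}|}\int_{\kappa\mathcal{A}}\mathrm{CRLB}(\mathbf{t}')\,\mathrm{d}\mathbf{t}'=\kappa^{2\beta}\,\overline{\mathrm{CRLB}}$ for the deployment $\{\kappa\mathbf{b}_n^*\}$. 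Optimality then follows by the same contradiction as before: if some layout $\{\mathbf{c}_n\}$ on $\kappa\mathcal{A}$ achieved a value $E'<\kappa^{2\beta}E^*$, then rescaling it by $1/\kappa$ and applying the identity with factor $\kappa^{-1}$ would produce the layout $\{\kappa^{-1}\mathbf{c}_n\}$ on $\mathcal{A}$ with area CRLB $\kappa^{-2\beta}E'<E^*$, contradicting the optimality of $\{\mathbf{b}_n^*\}$. Hence $\{\kappa\mathbf{b}_n^*\}$ is optimal on $\kappa\mathcal{A}$ with optimal value $\kappa^{2\beta}E^*$.

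The individual calculations are routine, so the point that actually needs care is conceptual rather than computational: keeping the two distinct sources of $\kappa$ in the area integral separate and verifying that the $\kappa^{d}$ from the Jacobian of the change of variables exactly cancels the $\kappa^{d}$ from the normalizing volume $|\kappa\mathcal{A}|$. This cancellation is precisely what makes the area-averaged scaling law depend only on $\beta$ and be independent of the dimension $d$, a feature worth contrasting with the $d$-dependent node-count scaling law reported in Table~\ref{tab:scaling-laws}, where the dimension does \emph{not} cancel.
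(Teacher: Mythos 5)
Your proof is correct and follows essentially the same route as the paper's: you track the dilation through the factorization $\mathbf{F}=\mathbf{J}^T\mathbf{R}^{-1}\mathbf{J}$, showing $\mathbf{J}$ is scale-invariant while $\mathbf{R}^{-1}$ acquires the factor $\kappa^{-2\beta}$, and then close with the identical rescale-by-$1/\kappa$ contradiction argument for optimality. Your explicit change-of-variables check that the $\kappa^{d}$ from $\mathrm{d}\mathbf{t}'$ cancels the $\kappa^{d}$ from $|\kappa\mathcal{A}|$ (and the remark that the expectation over $G_t$ commutes with the deterministic scaling factor) makes rigorous a step the paper leaves implicit, but it does not constitute a different approach.
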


\begin{proof}
	Denote any target position by \(\mathbf t\) and its scaled version by \(\mathbf t'=\kappa\,\mathbf t\), as shown in Fig.~\ref{figure2}(d).  Similarly \({\bf{b}}_n'=\kappa\,{\bf{b}}_n\). 
	For the scaled region, each row of the Jacobian matrix $\mathbf J(\mathbf t)$ becomes
	\vspace{0mm}
	\begin{equation}
		\frac{({\bf{b}}_n'-\mathbf t')^T}{\|{\bf{b}}_n'-\mathbf t'\|}
		=\frac{(\kappa{\bf{b}}_n-\kappa\mathbf t)^T}{\kappa\|{\bf{b}}_n-\mathbf t\|}
		=\frac{({\bf{b}}_n-\mathbf t)^T}{\|{\bf{b}}_n-\mathbf t\|},
		\vspace{0mm}
	\end{equation}
	so the scaled Jacobian is \(\mathbf J(\mathbf t')=\mathbf J(\mathbf t)\).
	Ignoring fixed multiplicative constants, the original inverse‐covariance \(\mathbf R^{-1}\) is diagonal with entries of the form  \(1/(\|{\bf{b}}_n-\mathbf t\|^{\beta}\|{\bf{b}}_m-\mathbf t\|^{\beta})\) for the bi-static link from BS $n$ to a target and then to BS $m$. After scaling, we have:
	\vspace{0mm}
	\begin{equation}
		\|{\bf{b}}_n'-\mathbf t'\|=\kappa\,\|{\bf{b}}_n-\mathbf t\|
		\;\Longrightarrow\;
		\mathbf R'^{-1}=\kappa^{- 2 \beta}\,\mathbf R^{-1}.
		\vspace{0mm}
	\end{equation}
	Then, the FIM of a target point can be formulated as:
	\vspace{0mm}
	\begin{equation}
		\mathbf F(\mathbf t')
		=\mathbf J^T\bigl(\kappa^{- 2 \beta}\mathbf R^{-1}\bigr)\mathbf J
		=\kappa^{- 2 \beta}\,\mathbf F(\mathbf t).
		\vspace{0mm}
	\end{equation}
	The CRLB is \(\mathrm{Tr}(\mathbf F^{-1})\).  Thus, we have:
	\vspace{0mm}
	\begin{equation}
			\overline{\mathrm{CRLB}}\bigl(\mathcal A',\{\mathbf b_n^*\}\bigr) = \kappa^{2 \beta} 
			\overline{\mathrm{CRLB}}\bigl(\mathcal A,\{\mathbf b_n^*\}\bigr).
			\vspace{0mm}
	\end{equation}
	Assume for the method of contradiction that $\{\kappa\,\mathbf b_n^*\}$ is not optimal on $\kappa\,\mathcal A$, so there exists another deployment achieving $\overline{\mathrm{CRLB}}\bigl(\mathcal A',\{\mathbf b_n^*\}\bigr) < \kappa^{{2 \beta} }E^*$.  Rescaling that deployment by $1/\kappa$ yields a deployment on $\mathcal A$ with $\overline{\mathrm{CRLB}}<E^*$, contradicting the optimality of $\{\mathbf b_n^*\}$.  Therefore $\{\kappa\,\mathbf b_n^*\}$ must be optimal and the $\overline{\mathrm{CRLB}}$ is exactly $\kappa^{2 \beta} E^*$.
	Hence scaling the region by \(\kappa\) maps the optimal deployment \({\bf{b}}^*\to\kappa\,{\bf{b}}^*\) and scales the CRLB by \(\kappa^{2 \beta} \).  
\end{proof}

By exploiting invariance to rigid motions (global translations, rotations, and reflections) together with the uniform-area scaling in Proposition~4, an optimal BS deployment computed for a canonical region can transfer directly to any translated/rotated/reflected/scaled instance of that region. Based on the conclusions of Propositions 1-4 above, we can further characterize the optimal deployment features for large-scale regions. In particular, for extensive areas the optimal deployment exhibits clear regularity (defined here as the presence of repetitive, periodic, or structured spatial arrangements), thereby significantly simplifying the design process.

\begin{remark}\label{RegularRegion}
	{\textbf{Area Periodicity}}. Although a rigorous proof in the general case is still open, it is intuitively reasonable that the optimal deployment $\mathbf{b}^*$ is $\Lambda$-periodic, whenever the sensing region is sufficiently large, effectively a tiling of identical subregions, and BSs are densely deployed. In other words, for any translation vector $v\in\Lambda$, the optimal deployment follows the invariance to displacement principle as described in Proposition \ref{TranslationInvariance}. 
	The key intuition is that the total cost decomposes into a sum of identical local costs over each periodic sub-region, each depending mainly on the pattern within that sub-region and invariant under $\Lambda$-translations. If two sub-regions had different deployment patterns, one with higher cost, one could copy the lower-cost pattern into the other sub-regions to reduce the overall CRLB, contradicting the assumed optimality of $\mathbf{b}^*$. Therefore, the optimal deployment must repeat the same pattern across all sub-regions, yielding $\Lambda$-periodicity.
\end{remark}

Moreover, upon leveraging Proposition \ref{ScaleInvariance} and the analysis in Remark \ref{RegularRegion}, we can derive an upper bound on the scaling law of the area CRLB as follows, which can serve as a guideline for ISAC network deployment.

\begin{theorem}\label{BSCountScaling}
	Let $n_0$ be a fixed constant number of BSs optimally deployed in a sufficiently large region  $\mathcal{A}\subset\mathbb{R}^d$, achieving optimal $\overline{\rm{CRLB}}  = E^*(n_0,\mathcal{A})$.  If the BS count is increased by a factor of $N$, i.e., the total BS number becomes $N \times n_0$, then the  optimal $\overline{\rm{CRLB}}$ satisfies a scaling law upper bound:
	\vspace{0mm}
	\begin{equation}
		\lim\limits_{N \to \infty}\frac{E^*(N \times  n_0,\mathcal{A})}{E^*(n_0,\mathcal{A})} \,
		\le \mathcal{O}\bigl(N^{-2 \beta/d}\bigr).
		\vspace{0mm}
	\end{equation}
\end{theorem}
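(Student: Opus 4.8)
The plan is to prove the bound by \emph{construction}, exhibiting a feasible deployment of $N\times n_0$ BSs whose area CRLB already attains the claimed order, and then invoking the optimality of $E^*(N\times n_0,\mathcal A)$. Consistent with the phrase \emph{scaling law upper bound} in the statement, I read the claim as an achievability bound on the optimal A-CRLB: the optimal configuration can only improve on any explicit construction, so it suffices to build one deployment that decays at the stated rate and then simplify the exponent using $\beta\ge 2$.

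First I would partition $\mathcal A$ into $N$ congruent sub-cells $\{\mathcal A_i\}_{i=1}^N$, each geometrically similar to $\mathcal A$ and scaled by the linear factor $\kappa=N^{-1/d}$, so that $|\mathcal A_i|=|\mathcal A|/N$ and the cells tile $\mathcal A$. Within each cell I place a transformed copy of the optimal $n_0$-BS layout $\{\mathbf b_n^*\}$, where the transformation is the composition of the translation, rotation, reflection, and scaling that carries $\mathcal A$ onto $\mathcal A_i$. By Propositions~\ref{TranslationInvariance}--\ref{ScaleInvariance} this transformed layout remains optimal \emph{for that cell}, and by the scaling result of Proposition~\ref{ScaleInvariance} its per-cell area CRLB equals $\kappa^{2\beta}E^*(n_0,\mathcal A)=N^{-2\beta/d}E^*(n_0,\mathcal A)$. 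The construction uses exactly $N\times n_0$ BSs.

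Next I would upgrade this per-cell guarantee to a network-wide one. The Fisher information $\mathbf F(\mathbf t)=\mathbf J^{T}\mathbf R^{-1}\mathbf J$ in (\ref{FinsherMatrix}) is a sum of positive semidefinite contributions over the BS links, so enlarging the BS set only increases $\mathbf F(\mathbf t)$ in the Loewner order and hence decreases $\mathrm{trace}(\mathbf F(\mathbf t)^{-1})$. Consequently the CRLB attained at a target $\mathbf t\in\mathcal A_i$ by the \emph{full} set of $N\times n_0$ BSs is no larger than the CRLB attained by the $n_0$ BSs lying in $\mathcal A_i$ alone. Averaging over $\mathcal A$ then gives $E^*(N\times n_0,\mathcal A)\le N^{-2\beta/d}E^*(n_0,\mathcal A)$, and since $\beta\ge 2$ we have $N^{-2\beta/d}\le N^{-4/d}$, which delivers the $\mathcal O(N^{-4/d})$ scaling.

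The hard part will be the tiling step: a generic region $\mathcal A$ cannot be decomposed \emph{exactly} into $N$ similar copies of itself for arbitrary $N$, so the equalities above hold only asymptotically. I would close this gap using the periodicity heuristic of Remark~\ref{RegularRegion} together with the $N\to\infty$ limit: the cells that straddle $\partial\mathcal A$ and therefore fail to be exact similar copies form a boundary layer whose volume fraction scales like the surface-to-volume ratio $\mathcal O(N^{-1/d})\to 0$, so their contribution to the area average vanishes in the limit. A secondary point to verify is that the expectation over the random beamforming gain $G_t$ and the measurement-noise parameters $\eta_{ij}^2$ enters only through the scale-free geometric factors already treated in Proposition~\ref{ScaleInvariance}, so the $\kappa^{2\beta}$ scaling survives the $\mathrm{E}_{G_t}[\cdot]$ averaging.
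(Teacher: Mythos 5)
Your proposal is correct and takes essentially the same route as the paper's own proof: partition $\mathcal{A}$ into $N$ scaled sub-cells, replicate the scaled optimal $n_0$-BS layout in each cell so that Proposition~\ref{ScaleInvariance} gives the per-cell factor $\kappa^{2\beta}=N^{-2\beta/d}$, and then argue that full cooperation among all $N\times n_0$ BSs can only reduce the CRLB, yielding $E^*(N n_0,\mathcal{A})\le N^{-2\beta/d}\,E^*(n_0,\mathcal{A})$. Your extra steps --- the explicit Loewner-order/PSD-additivity justification of the cooperation inequality, the weakening $N^{-2\beta/d}\le N^{-4/d}$ via $\beta\ge 2$ (the paper silently takes $\beta=2$), and the boundary-layer handling of regions that do not tile exactly (the paper instead assumes $N=Z^d$ and hypercube subdivision, invoking Remark~\ref{RegularRegion}) --- are refinements of steps the paper leaves implicit, not a different argument.
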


\begin{proof}
	Starting from the optimal placement of $n_0$ BSs in $\mathcal{A}$, we subdivide each of the $n_0$ original regions into $N$ congruent hypercubes by dividing each axis into $N^{1/d}$ equal segments (approximating $N^{1/d}$ as integer for large $N$, i.e., we can always find an integer \(Z\ge0\) with
	$N \;=\; Z^d$). This yields $Nn_0$ subregions, each with a volume of $|\mathcal{A}|/(N)$, and linear dimensions along each axis scaled by a factor $\kappa=N^{-1/d}$. 
	
	Let $E^*(n,\mathcal{A})$ denote the minimum achievable area-CRLB for the sensing-only deployment problem \eqref{P1.1} when $n$ BSs are deployed over region $\mathcal{A}$. 
	Based on Remark~\ref{RegularRegion}, we partition $\mathcal{A}$ into $N$ congruent subregions $\{\mathcal{A}_i\}_{i=1}^{N}$, and in each $\mathcal{A}_i$ we place $n_0$ BSs by taking the  optimal deployment for $\mathcal{A}$ and uniformly scaling it to fit $\mathcal{A}_i$. This construction yields a feasible deployment over the entire region $\mathcal{A}$ with a corresponding achieved objective value, denoted by ${E}_{\rm cons} (N \times n_0,\mathcal{A})$ for each subregion. By Proposition~\ref{ScaleInvariance}, scaling the region by a factor $\kappa$ results in the CRLB scaling by $\kappa^{2 \beta} $. Hence each new subregion's optimal CRLB is
	\vspace{0mm}
	\begin{equation}\label{ScalingProve}
		\begin{aligned}
			E^*(N \times n_0,\mathcal{A}) \le& {E}_{\rm cons} (N \times n_0,\mathcal{A}) \\
			\overset{({\rm{a}})}\le & E^*(n_0,\mathcal{A}_i) \\
			=& \kappa^{2 \beta} \,E^*(n_0,\mathcal{A}) 
			= N^{- 2\beta/d}\,E^*(n_0,\mathcal{A}).
		\end{aligned}	
	\end{equation}
	The inequality $({\rm a})$ holds due to the fact that in practice all $N \times n_0$ BSs can cooperate fully. Hence this partition-based result is indeed an upper bound, and the actual cooperative CRLB can be further reduced.
	According to (\ref{ScalingProve}), the mean CRLB over $(N \times n_0)$ BSs satisfies $E^*(Nn_0,\mathcal{A})
	\le N^{- 2\beta /d}\,E^*(n_0,\mathcal{A})$, i.e.,
	\vspace{0mm}
	\begin{equation}
		\lim\limits_{N \to \infty}\frac{E^*(Nn_0,\mathcal{A})}{E^*(n_0,\mathcal{A})}
		\le \mathcal{O}\bigl(N^{- 2\beta/d}\bigr).
		\vspace{0mm}
	\end{equation}
	This thus completes the proof.
\end{proof}

According to Theorem \ref{BSCountScaling}, when $\beta = 2$, the exponent $d =1,2,3$, yields the following explicit scaling law:
\vspace{0mm}
\begin{equation}
	\lim\limits_{N \to \infty}\frac{E^*(Nn_0,\mathcal{A})}{E^*(n_0,\mathcal{A})}
	\le \begin{cases}
		\mathcal{O}(N^{-4}), & d=1,\\
		\mathcal{O}(N^{-2}), & d=2,\\
		\mathcal{O}(N^{-4/3}), & d=3.
	\end{cases}
	\vspace{0mm}
\end{equation}
With BS count/density fixed, increasing the sensing-region dimension $d$ spreads nodes across more axes, shrinking the average solid-angle separation between BS look orientation and lengthening typical BS-target ranges; this dilutes bearing diversity, weakens the Fisher information, and reduces the marginal A-CRLB improvement per added BS.

\begin{remark}
	The above $\mathcal{O}(N^{-2\beta/d})$ bound in Theorem \ref{BSCountScaling} is relatively tight under dense BS environments: As $N$ grows, subregions more distant from a given target contribute vanishingly small error due to the increased distance aggravating the path‐loss. In particular, when $n_0$ is large so that each original BS region is already small, further subdividing introduces only marginal gains from distant cells, implying that the actual cooperative CRLB closely approaches this upper bound. Therefore, it follows that
	\vspace{0mm}
	\begin{equation}
		\lim\limits_{n_0 \to \infty} \lim\limits_{N \to \infty}\frac{E^*(N \times  n_0,\mathcal{A})}{E^*(n_0,\mathcal{A})} \,
		\approx \mathcal{O}\bigl(N^{-2\beta/d}\bigr)
		\vspace{0mm}
	\end{equation}
\end{remark}

\subsection{Low-complexity Sensing BS Deployment Optimization}
\label{LowComplixity}

Based on the Propositions and Theorem above, we can determine the optimal deployment plan for a new region by constructing a database of optimal configurations. By comparing the shape of the new region to those stored in the database, and using displacement, rotation, and scaling adjustments as necessary, one can readily construct deployment solutions.
Specifically, for any given deployment region \(\mathcal{A}\), there exists a set of representative regions \(\{\mathcal{A}_i\}_{i \in \cal{{I}}}\) for which the optimal deployments have already been solved, along with the corresponding solution set \(\{{\bf{b}}_n^*\}_{i \in \cal{{I}}}\). Then, there exists a transformation consisting of displacement, rotation, and scaling that maps \(\mathcal{A}_i\) to \(\mathcal{A}\), where $i$ denotes the index of the candidate area set $\cal{{I}}$. 
Then, the corresponding optimal deployment \({\bf{b}}^*\) can be derived using the inverse of the transformations applied above:  
\vspace{0mm}
\begin{equation}
	{\bf{b}}^* = \kappa \mathbf{O}_H \mathbf{R}_\varphi {\bf{b}}_{i}^* + \mathbf{\Delta}.
	\vspace{0mm}
\end{equation}  
Consequently, the optimal deployment for \(\mathcal{A}\) can be derived from \({\bf{b}}_n^*\) using the invariance and variance properties stated in Propositions \ref{TranslationInvariance}, \ref{RotationInvariance}, \ref{MirrorSymmetricRegion}, and \ref{ScaleInvariance}.  
This ensures that our invariance principles support an efficient and universal deployment strategy.

\section{ISAC Deployment Optimization}

In this section, we employ a MM framework, where each iteration constructs a surrogate upper bound and minimizes it, resulting in a robust, noise-agnostic procedure that can accommodate any design criterion.

\subsection{Problem Reformulation}
Directly integrating localization performance over a continuous region requires averaging over an uncountable set of target positions and random channel realizations, and this thus yields high-dimensional integrals that admit no closed-form solution. To handle the continuous‐region integration, we acquire \( K \) sampled points \( \{ \boldsymbol{t}_k \}_{k=1}^K \) uniformly distributed over the area of interest. Similarly, as for the communication constraint, the communication performance is evaluated at the sampled points drawn, ensuring that the area communication capacity meets the required threshold \( R^{th} \). 

Because the instantaneous channel gains \(g_n\) in (\ref{SNRexpression}) fluctuate randomly due to small-scale fading, it is impractical to adjust the BS placement relying on these transient values. 
To address the infeasibility of optimizing over random channel realizations, we first attain the average beamforming gain for both communication and sensing by taking the expectation over the small-scale fading, thereby ensuring a robust, long-term gain in both communication and sensing.  Then, we can derive the distribution of $g_{j}$ based on the moment matching technique of \cite{Hosseini2016Stochastic}. 
In (\ref{SNRexpression}), $g_{n}=\left|\mathbf{h}_{n}^H \mathbf{w}_n^c\right|^2$ represents the effective desired signals' channel gain, where $g_{n} \sim \Gamma \left( M_{\mathrm{t}} - 1, p^c\right)$ \cite{Hosseini2016Stochastic, Meng2024CooperativeTWC}. 
Similarly, the expected transmit beamforming gain for sensing can be formulated as ${\rm{E}}[\left|\mathbf{a}^H_{M_{\rm{t}}}(\Omega_n) \mathbf{w}_n^s\right|^2] = p^s\left( M_{\rm{t}} - 1\right)$, i.e., ${\rm{E}}[G_t] = p^s\left( M_{\rm{t}} - 1\right)$. 

Therefore, the ISAC deployment optimization problem can be formulated as
\vspace{0mm}
\begin{alignat}{2}
	\label{P2}
	(\rm{P2}): & \begin{array}{*{20}{c}}
		\mathop {\min }\limits_{\{{\bf{b}}_n\}} \quad  \frac{1}{K} \sum\nolimits_{k=1}^K {{\rm{CRLB}}\left( {\bf{t}}_k \right) } 
	\end{array} & \\ 
	\mbox{s.t.}\quad
	& \frac{1}{J} \sum\nolimits_{j=1}^J {R_c\left( {\bf{u}}_j \right) } \ge R^{th}. & \tag{\ref{P2}a}
	\vspace{0mm}
\end{alignat}
This sampling-based reformulation transforms the original continuous optimization problem into a tractable discrete form. The invariances in the preceding propositions and theorem carry over to the discretized formulation: co-transforming the sampling set with the region preserves sample-BS geometry, leaving the discrete objective and constraints unchanged. Hence the results of Section~\ref{DeploymentAnalysis} apply verbatim.

\subsection{Proposed Problem Solution}

Jointly optimizing multiple BS locations is challenging because the CRLB objective is highly non-convex. Plain gradient descent algorithm, without a majorizing surrogate, is markedly step-size-sensitive and often fails to achieve monotone descent. To tackle this challenge, we adopt a block-coordinate (alternating) MM scheme: at iteration \(t\), with \(N - 1\) BSs fixed, we majorize \(f(x)\) by a tight surrogate function \( g(x \mid x_t) \) and minimize it using a short projected gradient-descent step. This yields monotonic descent and convergence to a stationary point under standard MM conditions.

\subsubsection{MM Algorithm Framework}
In the majorization step, a surrogate function \( g(x \mid x_t) \) is constructed for globally upperbounding the objective function \( f(x) \) at point \( x_t \), satisfying $g(x \mid x_t) \geq f(x) \quad \text{and} \quad g(x_t \mid x_t) = f(x_t)$.
In the minimization step, the surrogate function is minimized to give the next update: $x_{t+1} = \arg \min_x g(x \mid x_t)$. 
This iterative process generates a non-increasing sequence \( \{ f(x_t) \}_{t \in \mathbb{N}} \), ensuring: $f(x_{t+1}) \leq g(x_{t+1} \mid x_t) \leq g(x_t \mid x_t) = f(x_t)$. 
Next, by optimizing each BS's location independently, while holding the remaining BS positions fixed, we obtain an initial coordinate set $\{x_i^{(0)}\}_{i=1}^N$.

\subsubsection{Problem Transformation}
\label{AlgorithmCVX}

Consider the optimization problem, where we aim for improving the localization performance, starting with an initialization that ensures having no node lies in the same plane as the target point. This initialization condition is crucial for achieving enhanced localization accuracy. In the following, we present a deployment optimization problem transformation method based on the MM framework. At each iteration, we optimize the position of one BS, while holding the other $(N-1)$ BSs fixed, thereby decomposing the original high-dimensional problem into a sequence of simpler subproblems. 

In the following, we introduce an optimization algorithm designed for updating the location of BS $n$ as a representative example. For notational simplicity, we first reformulate the problem for a single target point. The extension to the general case having 
$K$ target points can be achieved by summing the resultant objective function over all targets. To separate the term related to BS $n$ in the objective function, we define ${{\mathbf{v}}_{n,k}} = \frac{{{{{{\bf{t}}_k} - {\mathbf{b}_n}}}}}{{\left\| {{{{\bf{t}}_k}} - {\mathbf{b}_n}} \right\|}}$. Then the CRLB of target point ${\bf{t}}_k$, under centralized fusion-based localization, can be formulated as
 \vspace{0mm}
\begin{equation}\label{SensingLinkTermN}
\begin{aligned}
	&{\mathrm{CRLB}} ({\bf{t}}_k) = \text{Tr}\left(\mathbf{F}\left({\bf{t}}_k\right)^{-1}\right) \\
	&{\mathrm{Tr}}\bigg(\! \big(\underbrace{\underbrace{\frac{4{\mathbf{v}}_{n,k} {\mathbf{v}}_{n,k}^T}{\| {{\bf{b}}}_n - {\mathbf{t}}_k \|^{2 \beta}} }_{\text{monostatic link}}
	+ \underbrace{2 \sum_{\substack{i = 1, \\ i \neq n}}^{N} \! \frac{({\mathbf{v}}_{i,k} + {\mathbf{v}}_{n,k}) ({\mathbf{v}}_{i,k}^T + {\mathbf{v}}_{n,k}^T)}
	{\| {{\bf{b}}}_i - {\mathbf{t}}_k \|^{\beta} \| {{\bf{b}}}_n - {\mathbf{t}}_k \|^{\beta}} }_{\text{bi-static links}}}_{\text{Sensing links related to BS $n$}}
	+ \mathbf{M} \!\big)^{-1} \! \bigg),
\end{aligned}
\end{equation}
where ${\bf{M}} = \sum_{\substack{j = 1, j \neq n}}^{N}\sum_{\substack{i = 1,  i \neq n}}^{N} \frac{({\mathbf{v}}_{i,k} + {\mathbf{v}}_{j,k}) ({\mathbf{v}}_{i,k}^T + {\mathbf{v}}_{j,k}^T)}
{\| {{\bf{b}}}_i - {\mathbf{t}}_k \|^{\beta} \| {{\bf{b}}}_j - {\mathbf{t}}_k \|^{\beta}}$, including the term unrelated to the location of BS $n$. Assume that the initialization yields a non-degenerate BS-target geometry, so that the resulting matrix $\mathbf{M}$ is positive definite and hence invertible.

To facilitate problem transformation, let ${\bf{D}}_{n,k} = \mathrm{diag}\! \left(\! \frac{2}{{\left\| {\bf{b}}_1 - {\bf{t}}_k \right\|^{\beta} \left\| {\bf{b}}_n - {\bf{t}}_k \right\|^{\beta}}},\dots,\frac{1}{{\left\| {\bf{b}}_n - {\bf{t}}_k \right\|^{2 \beta}}},\dots,\frac{2}{{\left\| {\bf{b}}_N - {\bf{t}}_k \right\|^{\beta} \! \left\| {\bf{b}}_n - {\bf{t}}_k \right\|^{\beta}}} \! \!\right)$ $\in\mathbb{R}^{N\times N}$ and $
{\bf{U}}_{n,k} = \left[ {\mathbf{v}}_{1,k} + {\mathbf{v}}_{n,k}, \dots ,  {\mathbf{v}}_{N,k} + {\mathbf{v}}_{n,k}\right] \in\mathbb{R}^{3\times N}$. Then we can reformulate the sensing link terms in (\ref{SensingLinkTermN}) related to BS $n$ into ${\bf{U}}_{n,k}{\bf{D}}_{n,k}{{\bf{U}}}_{n,k}^T = \frac{4{\mathbf{v}}_{n,k} {\mathbf{v}}_{n,k}^T}{\| {{\bf{b}}}_n - {\mathbf{t}}_k \|^{2 \beta}} 
+ 2 \sum_{\substack{i = 1, i \neq n}}^{N} \frac{({\mathbf{v}}_{i,k} + {\mathbf{v}}_{n,k}) ({\mathbf{v}}_{i,k}^T + {\mathbf{v}}_{n,k}^T)}
{\| {{\bf{b}}}_i - {\mathbf{t}}_k \|^{\beta} \| {{\bf{b}}}_n - {\mathbf{t}}_k \|^{\beta}} $.
According to the Woodbury identity \cite{zhang2006schur}, it follows that
\vspace{0mm}
\begin{equation}
	\begin{aligned}
		&{\mathrm{CRLB}}({\bf{t}}_k) = {\mathrm{Tr}}\left(\left( {\bf{U}}_{n,k} {\bf{D}}_{n,k} {\bf{U}}_{n,k}^T + {\bf{M}} \right)^{-1}\right)  \\
		=& {\mathrm{Tr}}  \bigg( \! {\bf{M}}^{-1} \! -  \! {\bf{M}}^{-1} {\bf{U}}_{n,k} \! \left(\!  {\bf{D}}_{n,k}^{-1} + {\bf{U}}_{n,k}^T {\bf{M}}^{-1} \!\! {\bf{U}}_{n,k} \right)^{-1} \! {\bf{U}}_{n,k}^T {\bf{M}}^{-1}\bigg) \\
		=& {\mathrm{Tr}}\left({\bf{M}}^{-1} - {\bf{M}}^{-1} {\bf{U}}_{n,k} {\bf{Z}}_{n,k}^{-1} {\bf{U}}_{n,k}^T {\bf{M}}^{-1}\!\right),
		\vspace{0mm}
	\end{aligned}
\end{equation}
where $ {\bf{Z}}_{n,k} = {\bf{D}}_{n,k}^{-1} + {\bf{U}}_{n,k}^T {\bf{M}}^{-1} {\bf{U}}_{n,k}$. Then, the problem can be equivalently reformulated as maximizing ${\mathrm{Tr}}\left({\bf{M}}^{-1} {\bf{U}}_{n,k} {\bf{Z}}_{n,k}^{-1} {\bf{U}}_{n,k}^T {\bf{M}}^{-1}\right)$. To further facilitate the problem solution, the following inequality is introduced:
\vspace{0mm}
\begin{equation}\label{Inequality1}
	\begin{aligned}
	&- \operatorname{Tr} \left( {\mathbf{M}}^{-1} \mathbf{U}_{n,k} \mathbf{Z}_{n,k}^{-1} \mathbf{U}_{n,k}^T {\mathbf{M}}^{-1} \right) \\
	\leq & - \operatorname{Tr} \left[ 2 \left( {\mathbf{M}}^{-2} \mathbf{U}_{n,k}^{(r)} (\mathbf{Z}_{n,k}^{(r)})^{-1} \right)^T \left( \mathbf{U}_{n,k} - \mathbf{U}_{n,k}^{(r)} \right) \right] \\
	& +  \!\operatorname{Tr}  \!\left[ \! \left( \! (\mathbf{Z}_{n,k}^{(r)})^{-1} (\mathbf{U}_{n,k}^{(r)})^T {\mathbf{M}}^{-2} \mathbf{U}_{n,k}^{(r)} (\mathbf{Z}_{n,k}^{(r)})^{-1} \right)^T \! \! ( \mathbf{Z}_{n,k} - \! \mathbf{Z}_{n,k}^{(r)} )  \! \right] \\
	& - \operatorname{Tr} \left( {\mathbf{M}}^{-1} \mathbf{U}_{n,k}^{(r)} (\mathbf{Z}_{n,k}^{(r)})^{-1} (\mathbf{U}_{n,k}^{(r)})^T {\mathbf{M}}^{-1} \right),
\end{aligned}
\end{equation}
where $\mathbf{U}_{n,k}^{(r)}$ and $\mathbf{Z}_{n,k}^{(r)}$ represent the variable value at the $r$th iteration.
According to (\ref{Inequality1}), removing the constant part, the objective function can be approximated as
\vspace{0mm}
\begin{equation}
	\label{Simplified1equation}
	\mathop {\min }\limits_{{\bf{b}}_n} \quad \operatorname{Tr} \left( \mathbf{G}_{n,k} \mathbf{Z}_{n,k} - \mathbf{P}_{n,k}^T \mathbf{U}_{n,k} \right),
\end{equation}
where $\mathbf{G}_{n,k} =  \left((\mathbf{Z}_{n,k}^{(r)})^{-1} (\mathbf{U}_{n,k}^{(r)})^T {\mathbf{M}}^{-2} \mathbf{U}_{n,k}^{(r)} (\mathbf{Z}_{n,k}^{(r)})^{-1}\right)^{T} $ and $\mathbf{P}_{n,k} = 2 {\mathbf{M}}^{-2} \mathbf{U}_{n,k}^{(r)} (\mathbf{Z}_{n,k}^{(r)})^{-1}$.
Then, by substituting ${\bf{Z}}_{n,k}$ into (\ref{Simplified1equation}), the objective function in (\ref{Simplified1equation}) can be expressed as $
\operatorname{Tr} \left( \mathbf{G}_{n,k} {\mathbf{D}}_{n,k}^{-1} \right) + \operatorname{Tr} \left( \mathbf{G}_{n,k}  \mathbf{U}_{n,k}^T {\mathbf{M}}^{-1} \mathbf{U}_{n,k} \right) - \operatorname{Tr} \left( \mathbf{P}_{n,k}^T \mathbf{U}_{n,k} \right)$, where $\operatorname{Tr} \left( \mathbf{G}_{n,k} {\mathbf{D}}_{n,k}^{-1} \right)$ comprises quadratic and quartic functions of the BS location vectors. By introducing an appropriate change of variables $\delta_{n,k}$ satisfying $\left\| \mathbf{t}_k - {\bf{b}}_n \right\|^{\beta} \le \delta_{n,k}, \forall n,k$, it can be rendered convex. Then, we can first handle the term $\operatorname{Tr} \left( \mathbf{G}_{n,k}  \mathbf{U}_{n,k}^T {\mathbf{M}}^{-1} \mathbf{U}_{n,k} \right)$ as follows.
Define $\mathbf{U}_{n,k}^c = \left[ \mathbf{v}_{1,k}, \dots, \mathbf{v}_{n-1,k}, 0, \mathbf{v}_{n+1,k}, \dots, \mathbf{v}_{N,k}\right]
$, we have $\mathbf U_{n,k}= \mathbf U_{n,k}^c + \mathbf v_{n,k}\mathbf 1^T + \mathbf v_{n,k}\mathbf e_n^T$. Then, by substituting ${\bf{Z}}_{n,k}$ into the above equation and separating the linear and quadratic contributions in $\mathbf{v}_{n,k}$, the objective function can be transformed into
\vspace{0mm}
\begin{equation}
\begin{aligned}
	&\text{Tr} \left( {\mathbf{G}}_{n,k} \mathbf{U}_{n,k}^T {\mathbf{M}}^{-1} \mathbf{U}_{n,k} \right) \\
	=& a_1 \mathbf{v}^T_{n,k} {\mathbf{M}}^{-1} \mathbf{v}_{n,k} \! + \! \text{Tr} \left( {\bf{q}}_1^T \mathbf{v}_{n,k} \right) \! + \! \text{Tr} \left(  \mathbf{G}_{n,k} (\mathbf{U}_{n,k}^c)^T  {\mathbf{M}}^{-1} \mathbf{U}_{n,k}^c \right)  \!,
	\vspace{0mm}
\end{aligned}
\end{equation}
where $a_1 = \mathbf{1}^T \mathbf{G}_{n,k} \mathbf{1}  + \mathbf{e}_n^T \mathbf{G}_{n,k} \mathbf{1} + \mathbf{1}^T \mathbf{G}_{n,k} \mathbf{e}_n + \mathbf{e}^T_n \mathbf{G}_{n,k} \mathbf{e}_n$,  and ${\bf{q}}_1^T = \left( {\mathbf{M}}^{-1} \mathbf{U}_{n,k}^c \mathbf{G}_{n,k} \mathbf{1} \right)^T 	+ \mathbf{1}^T \mathbf{G}_{n,k} (\mathbf{U}^c_{n,k})^T {\mathbf{M}}^{-1}+ \mathbf{e}^T_n \mathbf{G}_{n,k} (\mathbf{U}_{n,k}^c)^T {\mathbf{M}}^{-1} + \left( {\mathbf{M}}^{-1} \mathbf{U}_{n,k}^c \mathbf{G}_{n,k} {\mathbf{e}}_n \right)^T$.

However, $a_1 \mathbf{v}_{n,k}^T {\bf{M}}^{-1} \mathbf{v}_{n,k}$ is non-convex, because $\mathbf{v}_{n,k}$ introduces a nonlinearity involving the reciprocal of the Euclidean distance between the BS and the target. Thereby, we have to further transform it based on the following inequality:
\vspace{0mm}
\begin{equation}
	\begin{aligned}
		&{\mathrm{Tr}}\left( a_1 \mathbf{v}_{n,k}^T {\mathbf{M}}^{-1} \mathbf{v}_{n,k} \right) 	\le 2a_1 \left( \tilde{\mathbf{ M}} \mathbf{v}_{n,k}^{(r)} \right)^T \left( \mathbf{v}_{n,k} - \mathbf{v}_{n,k}^{(r)} \right) \\
		& + a_1\lambda_{\max} \left( {\mathbf{M}}^{-1} \right)  +   a_1\left( \mathbf{v}_{n,k}^{(r)} \right)^T {\mathbf{\tilde M}} \mathbf{v}_{n,k}^{(r)} ,
		\vspace{0mm}
	\end{aligned}
\end{equation}
where $\tilde {\mathbf{ M}} = {\mathbf{M}}^{-1} - \lambda_{\max} \left( {\mathbf{M}}^{-1} \right) {\mathbf{I}}$, and $\lambda_{\max}({\mathbf{M}}^{-1})$ denotes the maximum eigenvalue of  matrix ${\mathbf{M}}^{-1}$. In addition, in (\ref{Simplified1equation}), the term $\mathbf{P}_{n,k}^T \mathbf{U}_{n,k} $ can be reduced by discarding any components independent of the optimization variable ${\bf{b}}_n$, leaving only the contributions $({{{\mathbf{1}}^T}{{\mathbf{P}}_{n,k}^T} + {{\mathbf{e}}^T_n}{{\mathbf{P}}_{n,k}^T}}) \mathbf{v}_{n,k}$. 
By omitting the constant terms, the objective function can be approximated as 
\vspace{0mm}
\begin{equation}\label{NonconvexApprox}
	\mathop {\mathrm{min}}\limits_{{\mathbf{b}}_n} \frac{1}{K} \sum\nolimits_{k = 1}^K \left( a_2 \delta^2_{n,k} + a_3 \delta_{n,k} +  {\mathbf{q}}_2^T \frac{\mathbf{t}_k - {\mathbf{b}}_n}{\left\| \mathbf{t}_k - {\mathbf{b}}_n \right\|} \right) ,
	\vspace{0mm}
\end{equation}
where ${\mathbf{q}}_2^T = 2a_1 \left( \tilde{\mathbf{ M}} \mathbf{v}_{n,k}^{(r)} \right)^T + {\mathbf{q}}_1^T + {{{\mathbf{1}}^T}{{\mathbf{P}}_{n,k}^T} + {{\mathbf{e}}^T_n}{{\mathbf{P}}_{n,k}^T}}$, and  $a_2 = {\mathbf{G}}_{n,k}{[n,n]}$, $a_3 = \frac{1}{2} \sum\nolimits_{m = 1, m \ne n}^N {\mathbf{G}}_{n,k}{[m,m]}$  $\left\| \mathbf{t}_k - {\mathbf{b}}_{m} \right\|^{\beta}$.
Here, (\ref{NonconvexApprox}) is still non-convex due to  $\frac{\mathbf{t}_k - {\mathbf{b}}_n}{\left\| \mathbf{t}_k - {\mathbf{b}}_n \right\|}$. By utilizing the trust region algorithm of \cite{yuan2015recent}, we introduce a constraint $\| {\mathbf{b}}_n - {\mathbf{b}}_n^{(r)} \| \le \epsilon$ to limit the update region. In the trust region method, a first-order approximation is used at each iteration for constructing a local linear model of the objective function based on its gradient. In practice, this leads to computing the so-called Cauchy point \cite{yuan2015recent}, which is obtained by taking a step in the negative gradient's direction scaled to the boundary of the trust region. 
Specifically, we adopt Taylor expansion to approximate the objective function as follow.
\vspace{0mm}
\begin{equation}
		\frac{\mathbf{t}_k - {\mathbf{b}}_n}{\left\| \mathbf{t}_k - {\mathbf{b}}_n \right\|} 
		 \approx  \frac{\mathbf{t}_k - {\mathbf{b}}_n^{(r)}}{\|\mathbf{t}_k - {\mathbf{b}}_n^{(r)}\|} + g'({\mathbf{b}}^{(r)}_n) ({\mathbf{b}}_n - {\mathbf{b}}_n^{(r)}),
		 \vspace{0mm}
\end{equation}
where $g'(\mathbf b^{(r)}_n)
= -\frac{1}{\|\mathbf t_k - \mathbf b_n^{(r)}\|}\,
\Biggl(\,
\mathbf I
- \frac{(\mathbf t_k - \mathbf b_n^{(r)})(\mathbf t_k - \mathbf b_n^{(r)})^T}
{\|\mathbf t_k - \mathbf b_n^{(r)}\|^2}
\Biggr)$. Following the transformation above, every term in the objective function involving the decision variables is either a convex quadratic or a linear function.
Then, for the communication constraint, we introduce auxiliary variables:
\vspace{0mm}
\begin{equation}\label{ReplacementConstraint}
	z_{n,j} \ge \|{\mathbf{b}}_n - {\mathbf u}_j\|^{\alpha}, \quad \forall j.
	\vspace{0mm}
\end{equation}
We define $C_j = 1 + \frac{1}{\tilde \sigma_c^2} \sum_{m \ne n} g_m \| \mathbf{b}_m - \mathbf{u}_j \|^{-\alpha}$,
and apply the first-order Taylor expansion to approximate the communication rate of the user located at ${\bf{u}}_j$ as $R_j \ge \log_2\left(C_j + \frac{g_n}{\tilde \sigma_c^2 z_{n,j}^{(r)}} \right) 
- \frac{ {g_n}{} }{\tilde \sigma_c^2 (z_{n,j}^{(r)})^2 \left( C_j + \frac{g_n}{\tilde \sigma_c^2 z_{n,j}^{(r)}} \right) \ln 2 } 
\left( z_{n,j} - z_{n,j}^{(r)} \right) = \tilde R_j$.
Thus, the original non-convex communication constraint can be approximately replaced by the following convex constraint:
\vspace{0mm}
\begin{equation}\label{ConvexCommConstraint}
	\frac{1}{J} \sum\nolimits_{j=1}^J \tilde R_j
	\ge R^{\mathrm{th}}.
	\vspace{0mm}
\end{equation}
Together with the convex constraint \eqref{ReplacementConstraint}, this yields a convex approximation with respect to the variables \( \{{\mathbf{b}}_n, z_{n,j} \} \). By iteratively updating the expansion point \( z_{n,j}^{(r)} = \|{\mathbf{b}}_n^{(r)} - {\mathbf u}_j \|^\alpha \), the algorithm converges to a local optimum of the original problem.

Finally, the problem can be approximated by the following convex program:
\vspace{0mm}
\begin{alignat}{2}
	\label{P3}
	(\mathrm{P3}): \quad & \min_{{\mathbf{b}}_n, \delta_{n,k}, z_{n,j}} 
	\frac{1}{K} \sum_{k = 1}^K \left( a_2 \delta_{n,k}^2 + a_3 \delta_{n,k} + \mathbf q_{2}^T g'({\mathbf{b}}_n^{(r)})  {\mathbf{b}}_n \right) \\ 
	\text{s.t.} \quad
	& \eqref{ReplacementConstraint}, \eqref{ConvexCommConstraint}, & \tag{\ref{P3}a} \\
	& \| {\mathbf{b}}_n - {\mathbf{b}}_n^{(r)} \| \le \epsilon, & \tag{\ref{P3}b} \\
	& \| \mathbf{t}_k - {\mathbf{b}}_n \|^{\beta} \le \delta_{n,k}, \forall k. & \tag{\ref{P3}c}
\end{alignat}
As a result, the approximated  problem can be updated and solved in an iterative manner by existing convex optimization solvers.
After solving the subproblem constructed for moving a step \({\mathbf{\Delta b}}_r\), we quantify how well the approximate objective function \(f_r({\mathbf{b}}_n) = \frac{1}{K} \sum\nolimits_{k = 1}^K \left( a_2 \delta^2_{n,k} + a_3 \delta_{n,k} + \mathbf q_{2}^T g'({\mathbf{b}}^{(r)}_n)  {{\mathbf{b}}_n} \right) \) predicted the actual reduction in \(f\):  
\vspace{0mm}
\begin{equation}\label{ComputateRatio}
	\rho_r
	= \frac{{\mathrm{CRLB}}\bigl({\mathbf{b}}_n^{(r)}\bigr)\;-\;{\mathrm{CRLB}}\bigl({\mathbf{b}}_n^{(r+1)}\bigr)}
	{{\mathrm{CRLB}}\bigl({\mathbf{b}}_n^{(r)}\bigr)\;-\;f_r({\mathbf{b}}_n^{(r+1)})},
	\vspace{0mm}
\end{equation}
where \(\mathrm{CRLB}\) is the true objective (from problem (P3)) evaluated at the current BS position, and \(f_r({\mathbf{b}}_n^{(r+1)})\) is its local approximation in (P3). Then, we have to accept or reject the step. If \(\rho_r\ge\eta_s\), the step is deemed sufficiently accurate and we set  
$
{\mathbf{b}}_n^{(r+1)} = {\mathbf{b}}_n^{(r)} + {\mathbf{\Delta b}}_r.
$ 
Otherwise, we reject the step and keep  
$
{\mathbf{b}}_n^{(r+1)} = {\mathbf{b}}_n^{(r)}.
$
In addition, we also have to adjust the trust‐region radius. 
Based on \(\rho_r\), we expand, keep, or shrink the region in which our quadratic model is trusted: 
\vspace{0mm} 
\begin{equation}\label{UpdateE}
\epsilon_{r+1} =
\begin{cases}
	\gamma_n\,\epsilon_r,      & \rho_r \ge \eta_v,\\
	\epsilon_r,                 & \eta_s \le \rho_r < \eta_v,\\
	\gamma_d\,\epsilon_r,      & \rho_r < \eta_s.
\end{cases}
\vspace{0mm}
\end{equation}
Here \(\eta_s<\eta_v\) are acceptance thresholds, where \(\gamma_n>1\) expands the region when the model is reliable, while \(\gamma_d\in(0,1)\) contracts it when the model poorly predicts \(f\).

\begin{algorithm}[t] 
	\footnotesize
	\caption{Proposed MM-based Algorithm}
	\label{Algorithm}
	\begin{algorithmic}[1]
		\STATE {\bf{Initialize}} BS positions $\{{\bf{b}}_n^{(0)}\}$, sensing target positions $\{{\bf{t}}_k\}$,  communication user positions $\{{\bf{u}}_j\}$, and trust region radius $\epsilon_0$.
		\REPEAT
		\FOR{each BS $n = 1,\dots,N$}
		\STATE Solve problem (P3) to obtain optimal movement $\Delta\mathbf{b}_n^{(r)}$ for BS $n$.
		\STATE Compute predicted and actual reduction ratio $\rho_r$ based on (\ref{ComputateRatio}).
		\IF {$\rho_r \ge \eta_s$}
		\STATE Accept update: ${\mathbf{b}}_n^{(r+1)} = {\mathbf{b}}_n^{(r)} + \Delta\mathbf{b}_n^{(r)}$
		\ELSE
		\STATE Reject update: ${\mathbf{b}}_n^{(r+1)} = {\mathbf{b}}_n^{(r)}$
		\ENDIF
		\STATE Update trust-region parameters according to (\ref{UpdateE})
		\ENDFOR
		\UNTIL{convergence of the optimal $\{{\mathbf{b}}_n^*\}$}
	\end{algorithmic}
\end{algorithm}

\section{System Performance}

Using numerical results in this section, we have studied the fundamental characteristics of ISAC networks and verified the tightness of the expression derived by Monte Carlo simulation results. The system parameters are given based on empirical and modeling studies \cite{Sun2016Investigation}. Specifically, the number of transmit antennas is $M_{\mathrm{t}} = 5$; the number of receive antennas is $M_{\mathrm{r}} = 5$; the transmit power is $P_{\mathrm{t}} = 0.01$W at each BS; $|\xi|^2 = 10$; the frequency $f^c = 5$ GHz; the bandwidth $B \in [10, 100]$ MHz; the noise power $-120$dB; pathloss coefficients $\alpha = 4$ and $\beta = 2$. In the following experiments, we consider both one-dimensional and two-dimensional target and user regions. It is important to note that in all scenarios, the BS locations are optimized within a three-dimensional deployment space. In the simulations, we assume that the target and users are located within the same region and are sampled at the same rate.

\subsection{Convergence and Scaling Law Evaluation}

\begin{figure}[t]
	\centering
	\includegraphics[width=8.4cm]{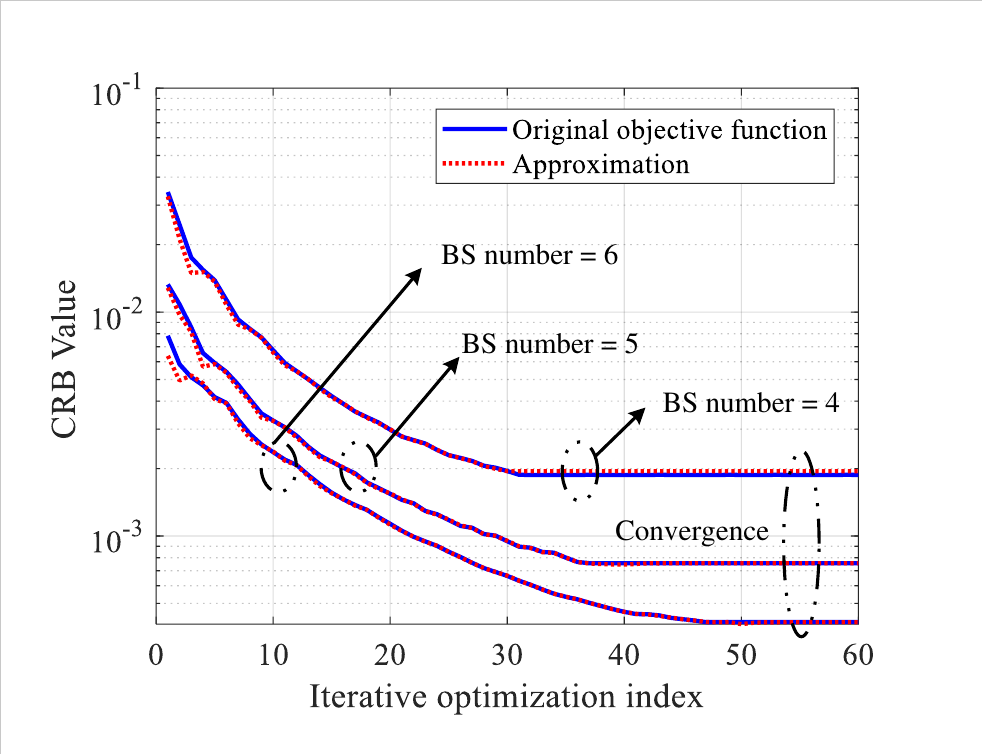}
	\vspace{0mm}
	\caption{Convergence evaluation of the proposed MM-based algorithm.}
	\label{figure5}
\end{figure}
Fig. \ref{figure5} illustrates the convergence behavior of our proposed algorithm. When the number of BSs is four, the algorithm converges in approximately 30 iterations. When the number of BSs increases, the number of iterations required grow correspondingly due to the enlarged solution space. Moreover, by comparing the values of our surrogate function against those of the original objective function, we observe that the surrogate function closely tracks the true objective function and they ultimately converge to the same value, thereby confirming the efficiency and accuracy of the proposed algorithm.

\begin{figure}[t]
	\centering
	\includegraphics[width=8.4cm]{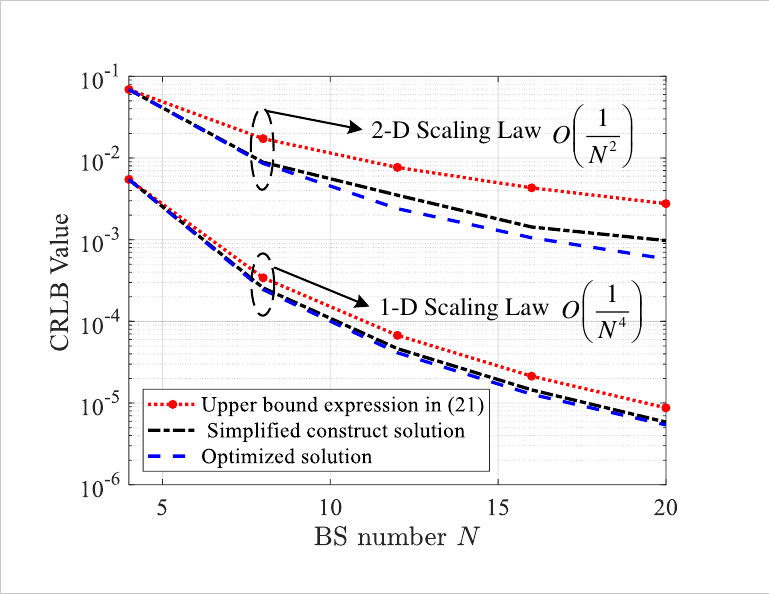}
	\vspace{0mm}
	\caption{Evaluation of scaling law derived in Theorem \ref{BSCountScaling} with 1D and 2D sensing area.}
	\label{figure10}
\end{figure}

\begin{figure}[htbp]
	\centering
	\vspace{0mm}
	\subfigure[BS deployment for optimal communication achieving $R^*$.]{
		\label{figure2a}
		\includegraphics[width=7.5cm]{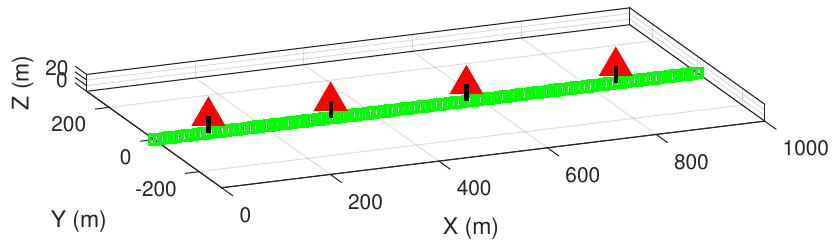}
	}
	\subfigure[BS deployment for optimal sensing.]{
		\label{figure2b}
		\includegraphics[width=7.5cm]{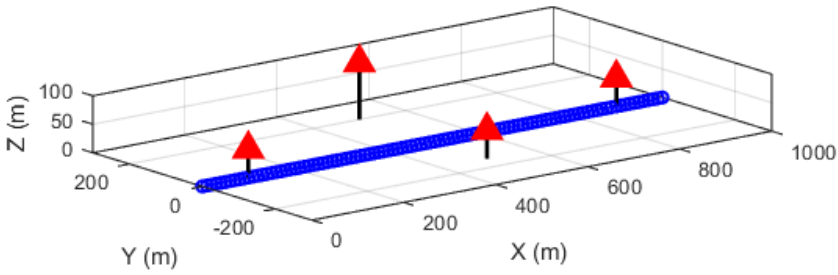}
	}
	\subfigure[BS deployment for maximizing sensing performance with $R_{th} = 0.8 R^*$.]{
	\label{figure2c}
	\includegraphics[width=7.5cm]{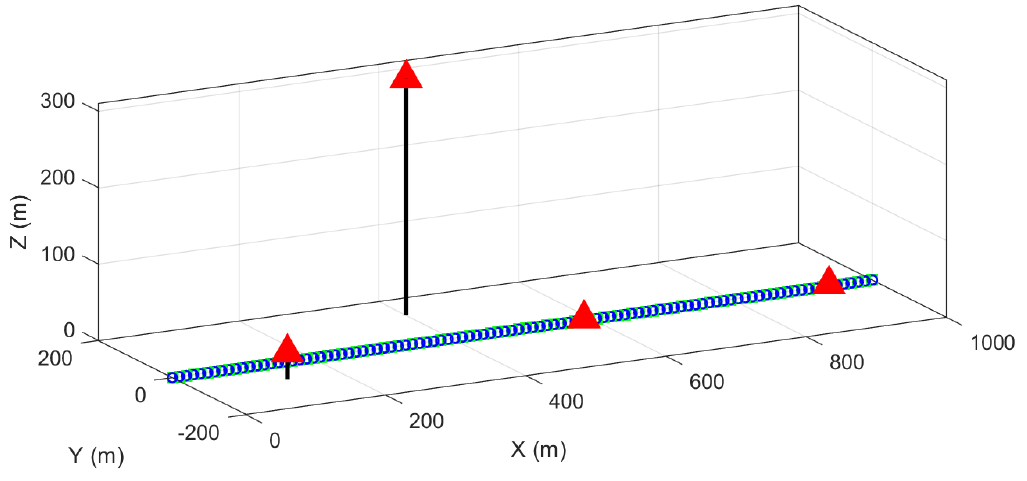}
	}
	\caption{BS deployment under varying sensing-communication weights.}
	\label{figure2ab}
\end{figure}

\begin{figure}[t]
	\centering
	\includegraphics[width=8.4cm]{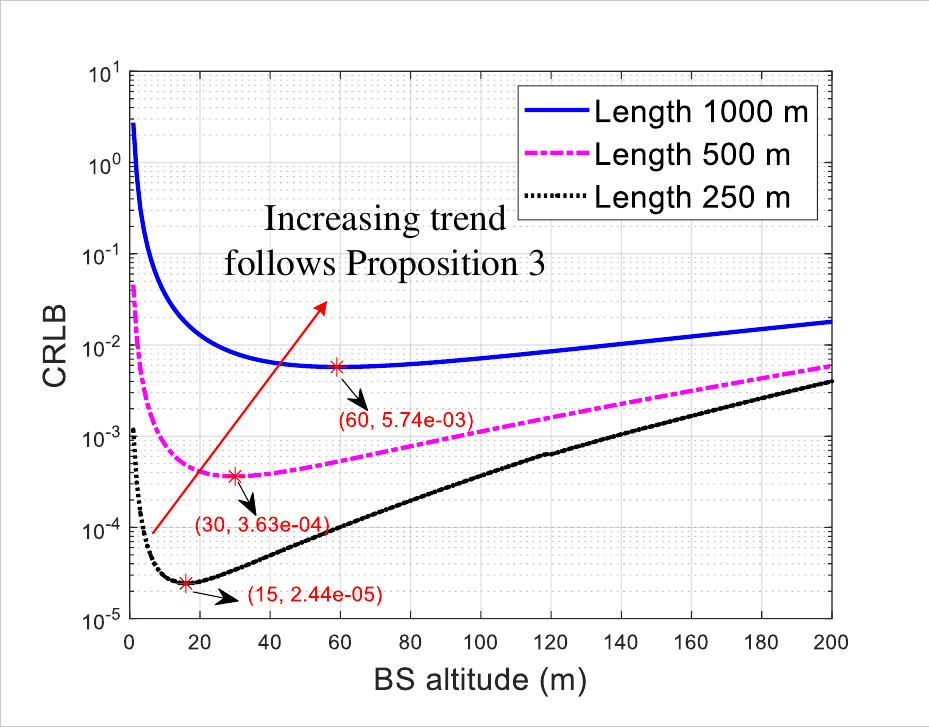}
	\vspace{0mm}
	\caption{Sensing performance versus BS height.}
	\label{figure6}
\end{figure}

Fig. \ref{figure10} presents the scaling law of the area CRLB as a function of the number of BSs, clearly showing a reduction in CRLB upon increasing the BS number. To ensure a relatively accurate scaling curve, we employ iterative optimization, where at each step we perform a global search to reposition one BS, while holding all others fixed, and repeat this process for each BS until convergence. Moreover, the scaling gain erodes as the sensing‐area dimension increases. More explicitly, under the same multiplicative increase in BS number, a 2D region must spread nodes over two orthogonal axes, whereas a 1D region aligns them along a single axis. To validate the periodic‐pattern deployment discussed in Remark \ref{RegularRegion}, we took the optimized four-BS layout and replicated it across the entire enlarged region, namely 'simplified constructed solution' in Fig.  \ref{figure10}. The resultant performance closely matches that of the fully optimized scheme, especially in the 1D scenario, demonstrating the practical effectiveness of regular deployments in large-scale networks. Furthermore, in 1D scenarios the theoretical CRLB‐scaling bound closely aligns with the simulation results, since the BSs deployed across the $n_0$ fundamental unit regions already realize the majority of the localization gain achievable under full BS cooperation. By contrast, in 2D scenarios a persistent gap arises between the theoretical bound and simulation results. The simplified upper bound does not capture the sensing‐performance degradation that occurs at the peripheries of individual subregions. By introducing additional BSs that jointly cover the entire 2D area, especially along edges and corners, these boundary effects are substantially alleviated, yielding localization gains beyond those predicted by the bound. Consequently, as the BS density increases, the empirical CRLB scaling in 2D surpasses the theoretical prediction.

\subsection{Sensing  Performance of 1D Area}

To contrast BS deployments optimized for area‐sensing performance to those tailored for communication coverage, we first consider an elementary one‐dimensional user region along the line \(y=0\), \(x\in[0,1000]\).  Users (green rectangle) are placed uniformly along this line, and a fixed number of BSs (red triangles) are placed according to two distinct strategies. Targets (blue dots) are likewise interspersed along the same line to evaluate sensing performance. See Figs. \ref{figure2ab}, \ref{figure4abs}, \ref{figure9ab}, and \ref{figure8abs} for illustration.
In Fig.~\ref{figure2a}), all BSs lie as close as possible to the communication user line, i.e., above \(y=0\) with their \(x\)-coordinates evenly spaced over \([0,1000]\). This arrangement minimizes the path-loss for each BS-user link, yielding maximal end‐to‐end SNR for purely communication-centric operation.
By contrast, in Fig.~\ref{figure2b}), to balance the associated sensing signal power and geometry gain requirements, two BSs remain near the target line (at \(y\approx0\)) to preserve low‐attenuation links, while the remaining BSs are shifted in \(y\) to form near‐perpendicular baselines relative to the target line.  This bi‐static geometry, yielding approximately \(90^\circ\) incidence angles at most targets enhances geometric gain, thereby improving area‐wide sensing accuracy without unduly sacrificing communication performance. As shown in Fig. \ref{figure2c}, under the communication constraint the optimal deployment places several BSs near the user region to boost link SNR, while positioning one BS at higher altitude to enhance sensing geometry over the target area, i.e., to increase angular diversity.

\begin{figure*}[htbp]
	\centering
	\vspace{0mm}
	\subfigure[Deployment for maximizing $R_c$.]{
		\label{figure3a}
		\includegraphics[width=5.7cm]{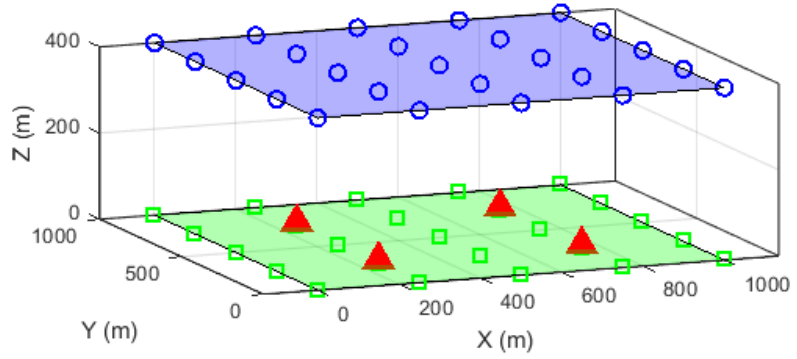}
	}
	\subfigure[Deployment for minimizing CRLB, $R_c \!\ge \! 5$ bps/Hz.]{
		\label{figure3b}
		\includegraphics[width=5.7cm]{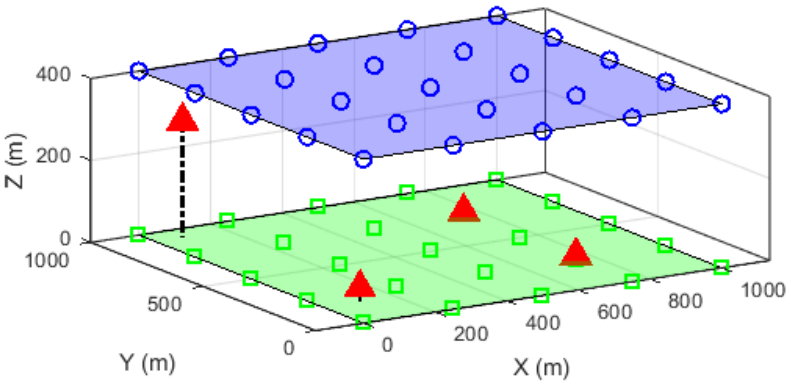}
	}
	\subfigure[Deployment for minimizing CRLB.]{
		\label{figure3c}
		\includegraphics[width=5.4cm]{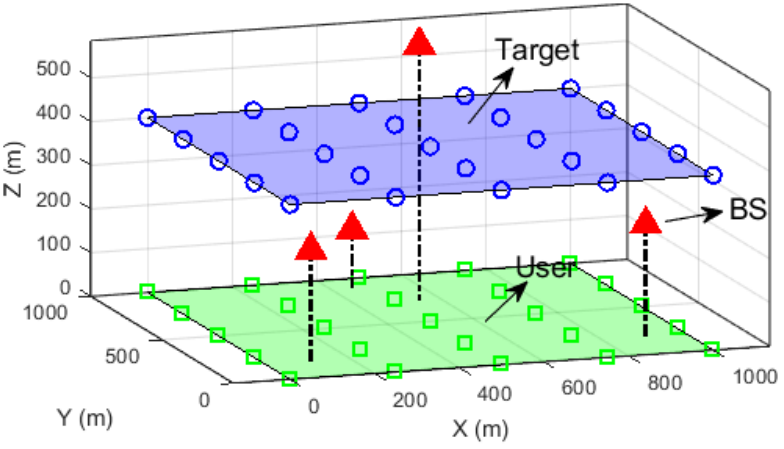}
	}
	\subfigure[CRLB map for communication optimal.]{
		\label{figure4a}
		\includegraphics[width=5.7cm]{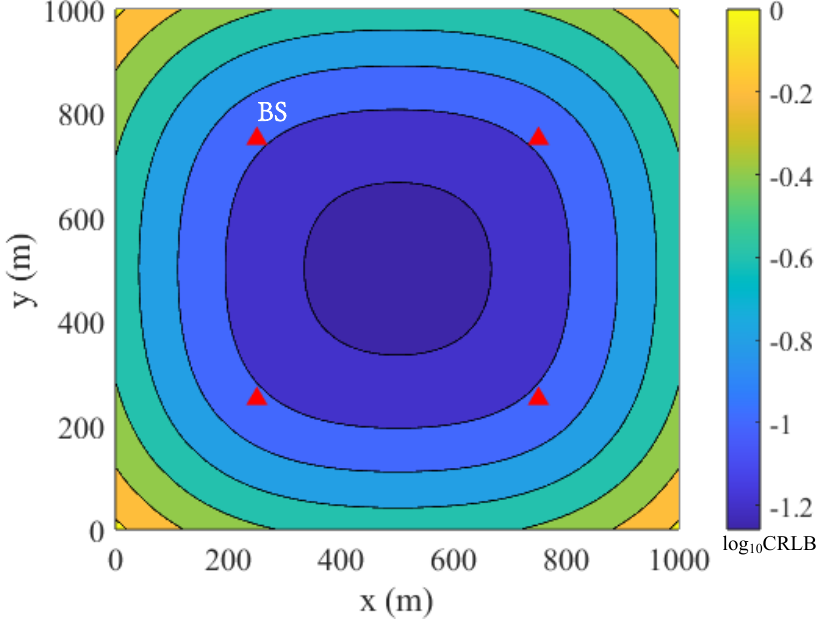}
	}
	\subfigure[CRLB map with $R_c \ge 5$ bps/Hz.]{
		\label{figure4b}
		\includegraphics[width=5.1cm]{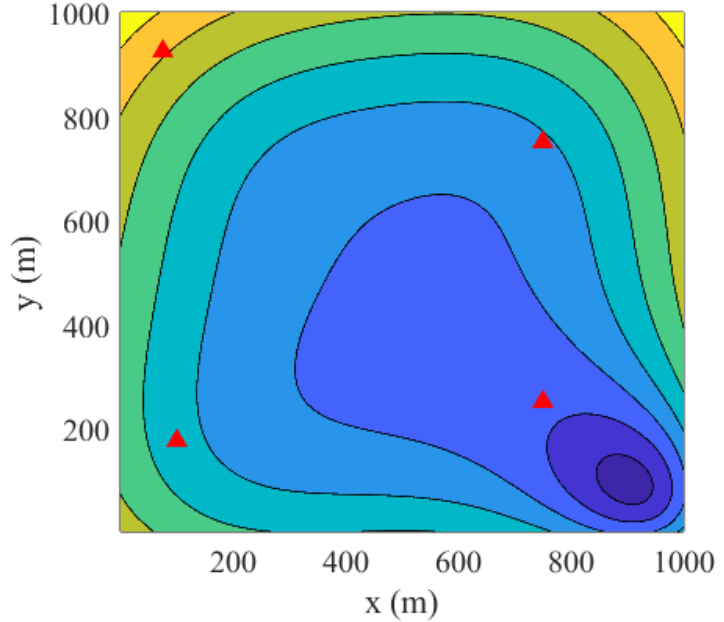}
	}
	\subfigure[CRLB map for sensing optimal.]{
		\label{figure4c}
		\includegraphics[width=5.1cm]{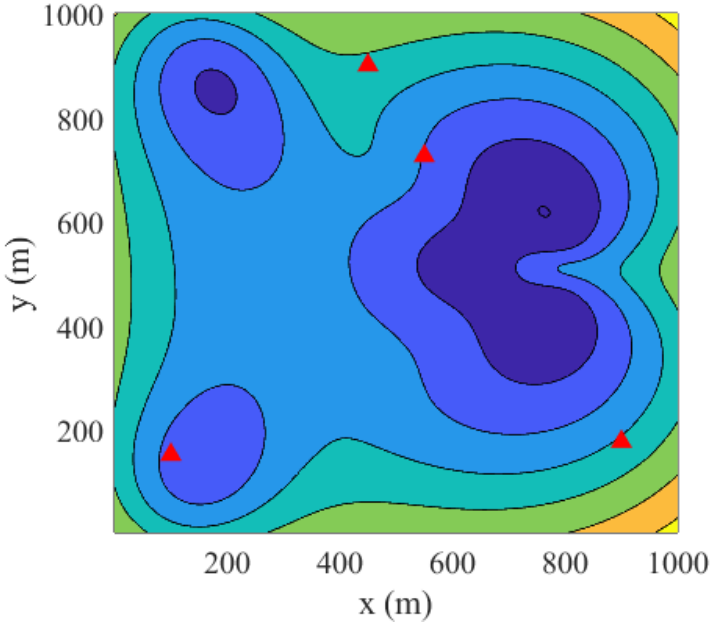}
	}
\vspace{0mm}
	\caption{Comparisons between sensing and communication (The color bar shows the base-10 logarithm of the CRLB, and same for the following results).}
	\label{figure4abs}
\end{figure*}

Fig.~\ref{figure6} shows the minimal CRLB as a function of the one‐dimensional region length $L$ and various BS heights. In each curve, the region length is increased from 250m to 1000m, and the optimal BS height is selected to minimize the averaged CRLB over all target positions in the interval \([0,L]\).
Moreover, when \(L\) is multiplied by an integer factor, the optimal BS height \(h^*\) also increases by roughly the same factor. This one‐to‐one proportionality between region size and BS height is a direct consequence of the geometric invariance properties. Explicitly, scaling the coverage interval by \(\kappa\) only preserves the angular geometry if the BS elevation is scaled by \(\kappa\) as well, as analyzed in Proposition \ref{ScaleInvariance}.
The dominant effect driving the optimal BS height is the geometric gain in target sensing. Increasing the number of BSs in proportion to the region length maintains consistent incidence angles and range spreads, thereby balancing the multiple impact factors in the CRLB across the entire coverage area.  These numerical findings validate our theoretical derivations and provide concrete guidelines for selecting the BS heights in large‐scale ISAC deployments.
\begin{figure}[t]
	\centering
	\includegraphics[width=8.4cm]{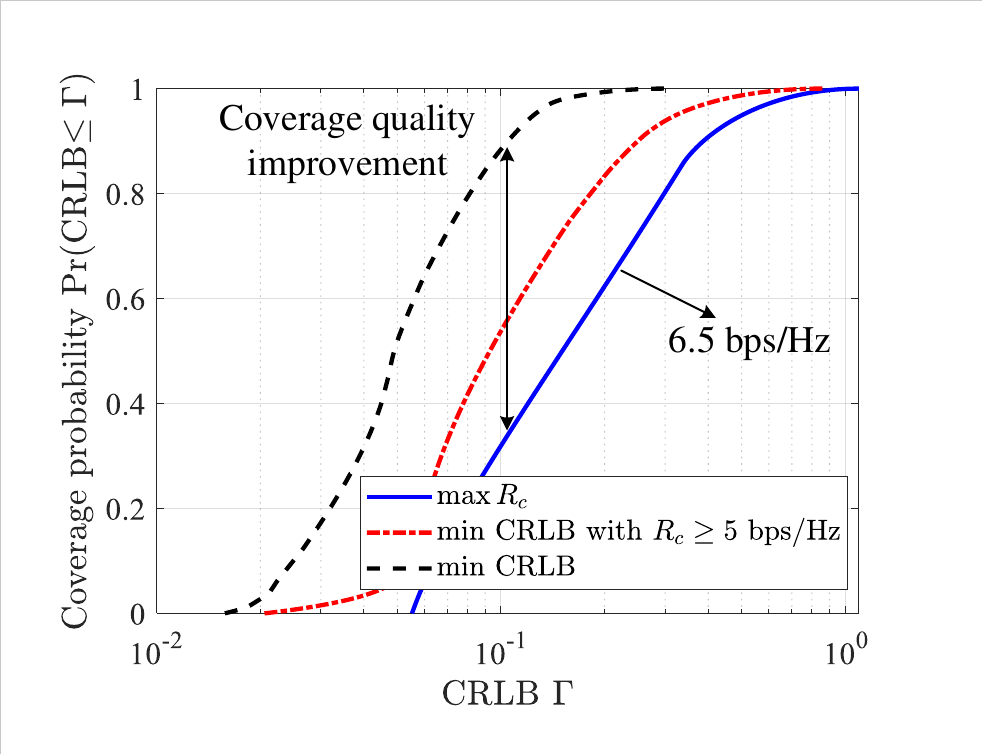}
	\vspace{0mm}
	\caption{CRLB coverage probability under various objectives and constraints.}
	\label{figure4}
\end{figure}

\subsection{Sensing  Performance of 2D Area}

As illustrated in Fig.~\ref{figure4abs}, we assess the two‐dimensional coverage performance of communication and sensing services, where the communication users are uniformly distributed on the ground plane (green) and targets over a parallel plane at an altitude of 400m (blue). Figs. \ref{figure3a} and \ref{figure4a}, \ref{figure3b} and \ref{figure4b}, as well as \ref{figure3c} and \ref{figure4c} depict three deployment strategies maximizing communication rate, minimizing the CRLB under a 5 bps/Hz communication constraint, and minimizing the CRLB without any communication requirement, respectively. In the communication‐only scenario (c.f. Fig.~\ref{figure3a} and Fig.~\ref{figure4a}), BSs are arranged symmetrically across the area, yielding a symmetric CRLB map with values ranging from $10^{-1.2}$ to $10^{0} = 1$. When enforcing the 5 bps/Hz rate (c.f. \ref{figure3b} and \ref{figure4b}), the BSs shift slightly toward the sensing plane and adjust their planar geometry to balance both objectives. Without any communication constraint (c.f. \ref{figure3c} and \ref{figure4c}), the optimal sensing configuration moves BSs closer to the target layer and reconfigures their layout to minimize the CRLB, where even a modest height adjustment of a single BS combined with a small horizontal displacement yields significant sensing performance gains. Compared to the communication‐optimal deployment, the overall positioning CRLB map is improved by a factor of $4$. In Fig. \ref{figure3c}, it is observed that simply minimizing the distance between BSs and the sensing region tends to place them as close as possible to the targets. However, while a shorter distance increases the scaling factor, it also reduces the angular diversity among the BSs. This causes the row vectors of the Jacobian matrix to become nearly linearly dependent, drastically increasing the condition number of the Fisher information matrix.
To balance the angular spread and measurement accuracy, BSs should be deployed at an appropriate distance from the sensing region, which is similar to the results in Fig. \ref{figure2b}. 

As shown in Fig. \ref{figure4}, when BSs are placed to maximize the communication rate, achieving 6.5 bps/Hz without any regard to sensing, the probability $\Pr(\mathrm{CRLB}\le\Gamma)$ remains low for all $\Gamma$ values up to 0.3. By introducing a moderate rate constraint, i.e., reducing the peak rate to 5 bps/Hz (around 23\% decrease), we observe an approximately 20\% improvement in the sensing coverage overall. Moreover, if the objective is switched to minimizing the CRLB without any communication requirement, the sensing coverage can increase by up to threefold at \(\Gamma=0.1\), demonstrating the significant trade-off between communication throughput and sensing performance under the consideration of BS deployment.

\begin{figure*}[htbp]
	\centering
	\vspace{0mm}
	\subfigure[CRLB map for optimal communication deployment.]{
		\label{figure9a}
		\includegraphics[width=5.2cm]{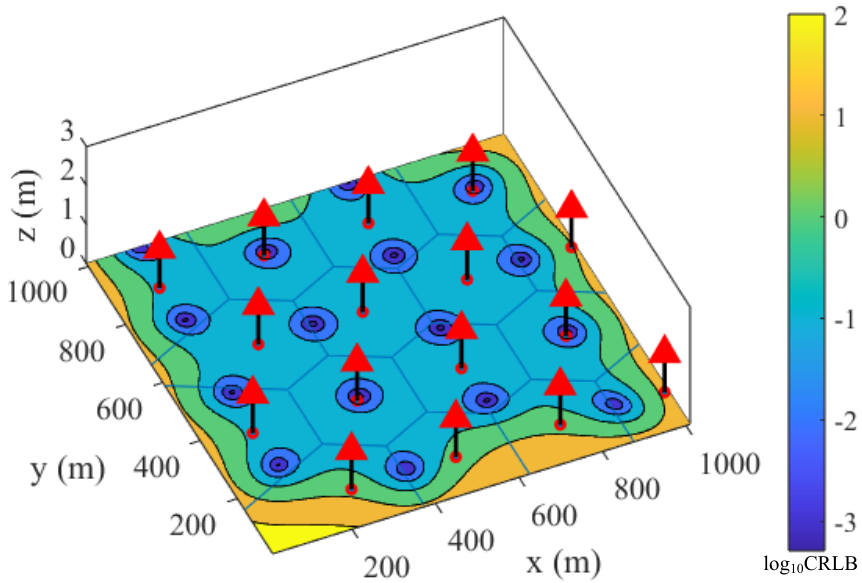}
	}
	\subfigure[CRLB map for optimal sensing deployment.]{
		\label{figure9b}
		\includegraphics[width=5.2cm]{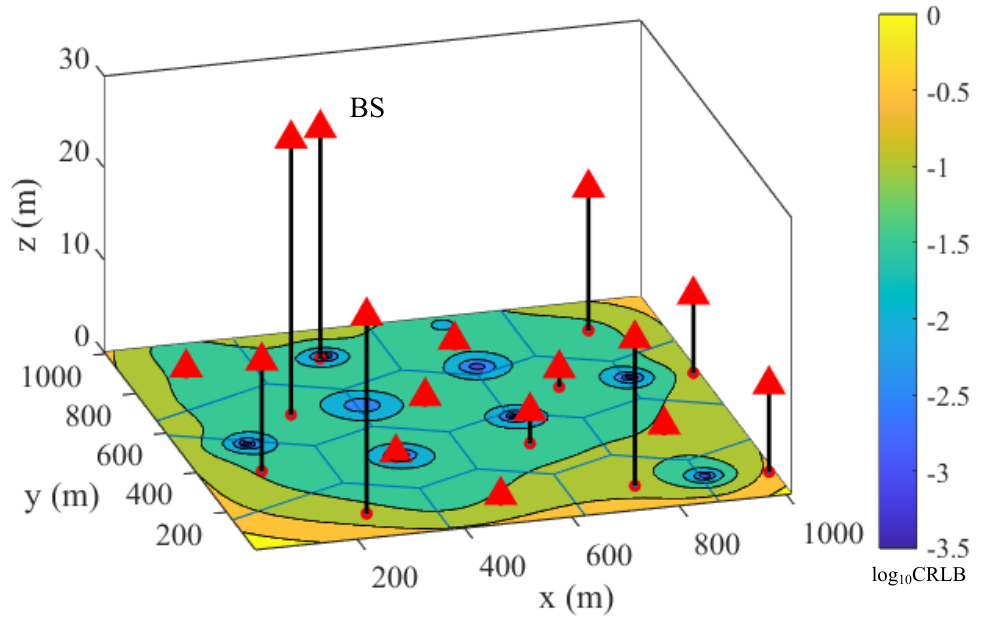}
	}
	\subfigure[Coverage probability comparisons.]{
		\label{figure9c}
		\includegraphics[width=4.5cm]{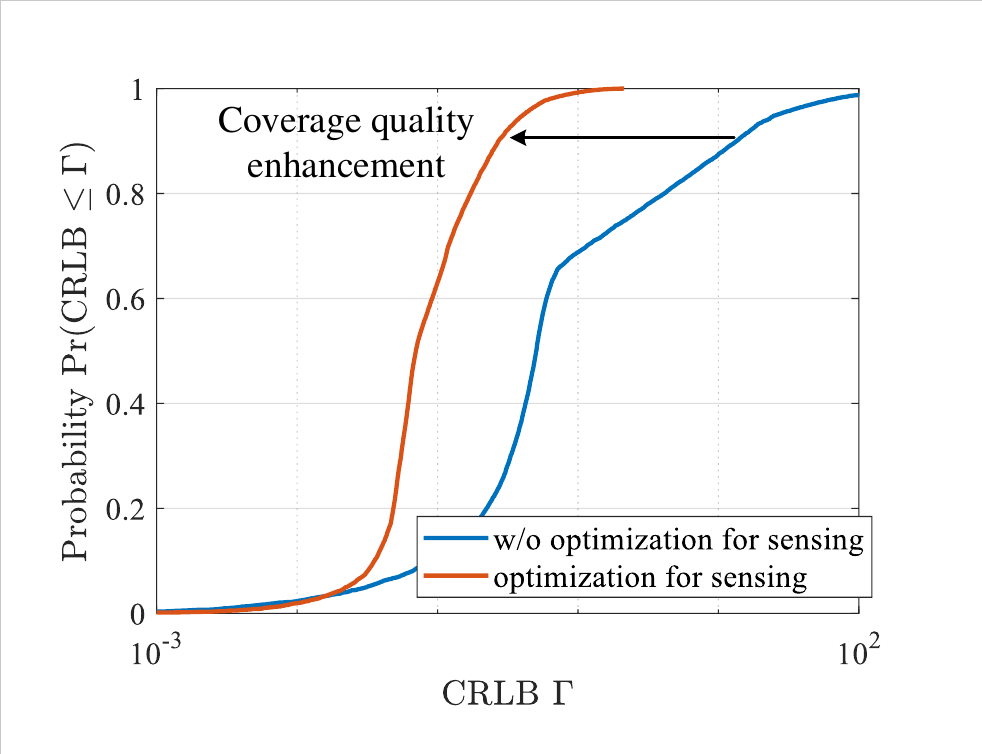}
	}
	\vspace{0mm}
	\caption{Regular horizontal deployment with height optimization.}
	\label{figure9ab}
\end{figure*}

\begin{figure*}[htbp]
	\centering
	\vspace{0mm}
	\subfigure[Target distribution PDF.]{
		\label{figure8a}
		\includegraphics[width=4.7cm]{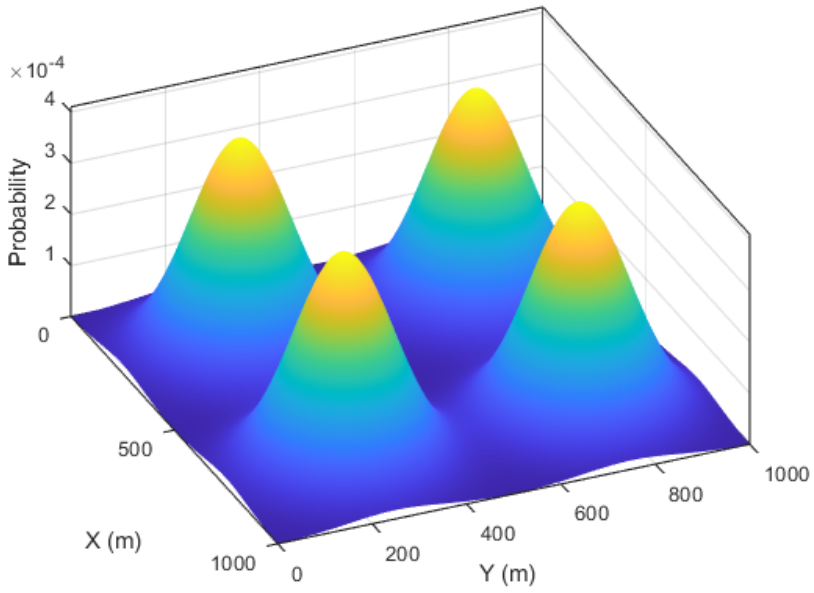}
	}
	\subfigure[Optimal deployment for sensing.]{
		\label{figure8b}
		\includegraphics[width=6.2cm]{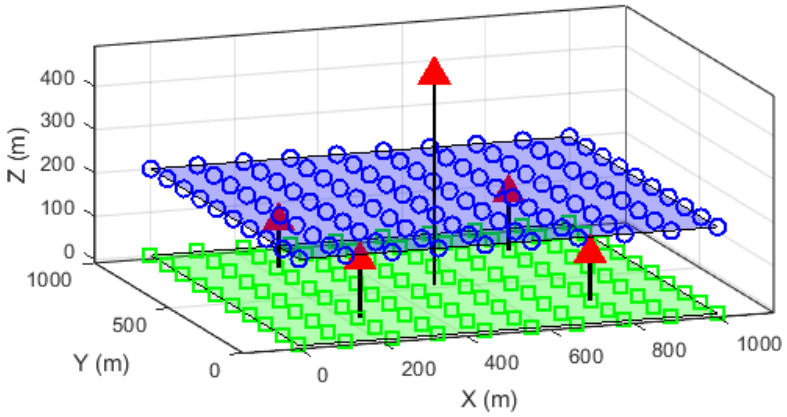}
	}
	\subfigure[Sensing‐optimal CRLB map.]{
		\label{figure8c}
		\includegraphics[width=5cm]{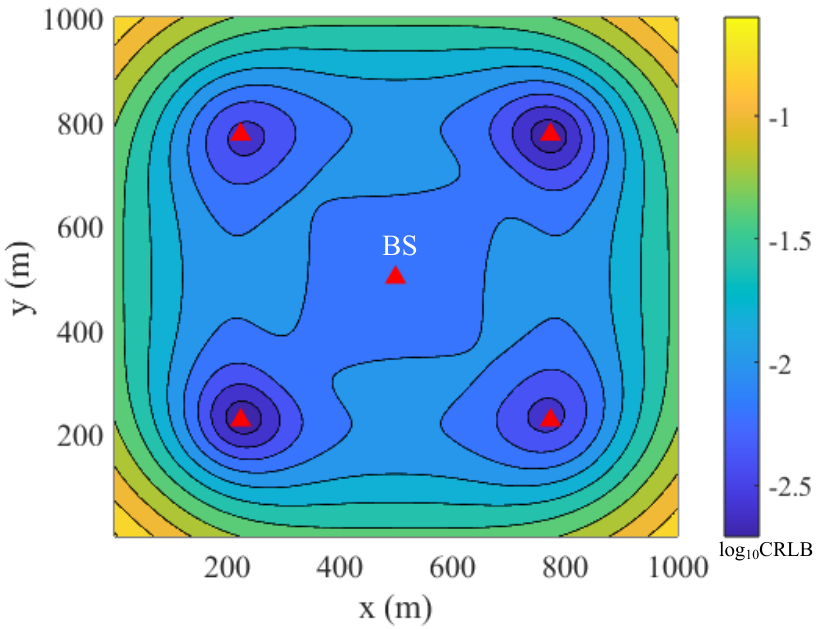}
	}
	\subfigure[Target distribution PDF.]{
		\label{figure8d}
		\includegraphics[width=4.7cm]{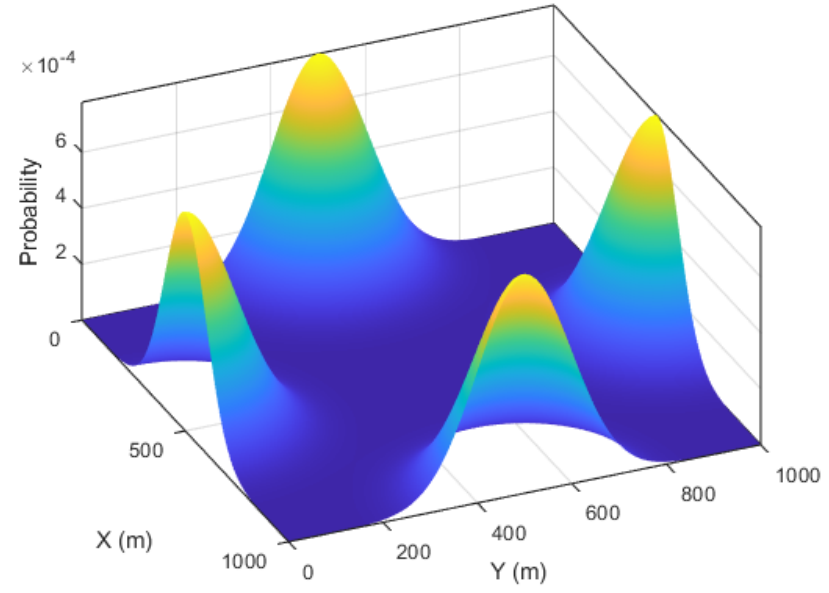}
	}
	\subfigure[Optimal deployment for sensing.]{
		\label{figure8e}
		\includegraphics[width=6.2cm]{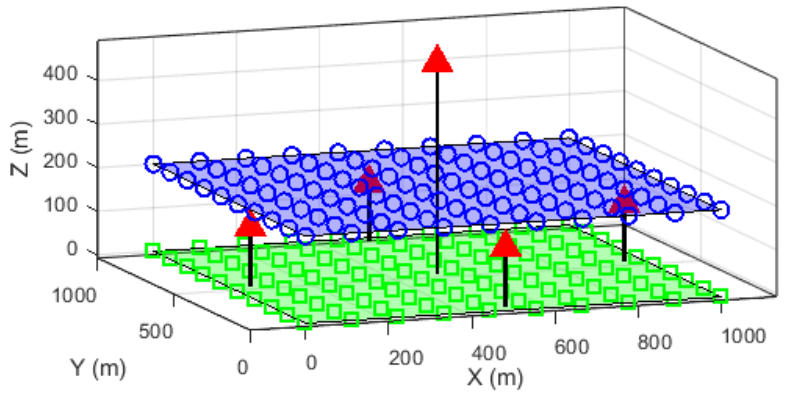}
	}
	\subfigure[Sensing‐optimal CRLB map.]{
		\label{figure8f}
		\includegraphics[width=5cm]{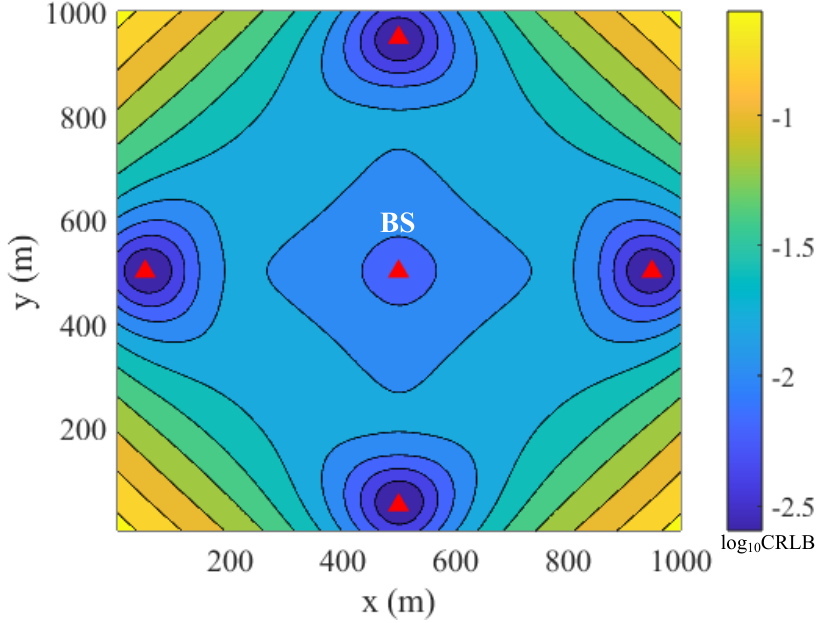}
	}
\vspace{0mm}
	\caption{Comparisons of 3-D BS deployment based on the proposed framework under different target distribution PDF in an area of interest.}
	\label{figure8abs}
\end{figure*}

Fig. \ref{figure9a} considers a conventional cellular‐style layout in which each BS is placed on a regular grid (horizontal coordinates fixed) at a uniform height of 10m. Despite proximity to some BSs, severe sensing‐coverage voids remain, particularly in regions where the geometric intersection angles of the sensing links are acute, a consequence of BSs being deployed relatively close to the target plane. To isolate the impact of antenna elevation, we then retain the same horizontal BS locations but optimize only their heights. As shown in Fig. \ref{figure9b}, this vertical repositioning compresses the CRLB values from the original span of $[10^{-3}, 10^{2}]$ down to $[10^{-5}, 10^{0}]$, reflecting a dramatic gain in localization precision. Finally, by applying a CRLB threshold of $10^{-1}$, Fig. \ref{figure9c} demonstrates that sensing coverage jumps from roughly 10\% to 60\%, illustrating that even modest BS‐height adjustments can yield substantial improvements in network‐level ISAC performance.

In Fig.~\ref{figure8abs}, we compare the BS layouts optimized for various spatial probability density functions (PDF) of sensing targets and communication users.  Let the service area be denoted by \(\mathcal{A}\), and assume that the user and target locations follow the known density functions \(P_u(\mathbf{u})\) and \(P_t(\mathbf{t})\), respectively, which can be estimated \emph{a priori} from historical access logs, traffic heatmaps, or mobility statistics. 
Under these distributions, the optimal BSs naturally cluster around high-density regions to enhance sensing performance.  However, to avoid poor geometric intersection angles that degrade localization accuracy, the deployment must also include at least one elevated pivot BS positioned near the overall centroid of \(\mathcal{A}\). This hybrid strategy both preserves fairness across the service area and significantly improves geometric gain for multi‐static sensing.  Consequently, areas having higher user or target concentration benefit from enhanced sensing coverage, while the pivot BS ensures robust performance in sparsely populated regions.

\begin{figure}[t]
	\centering
	\includegraphics[width=8.4cm]{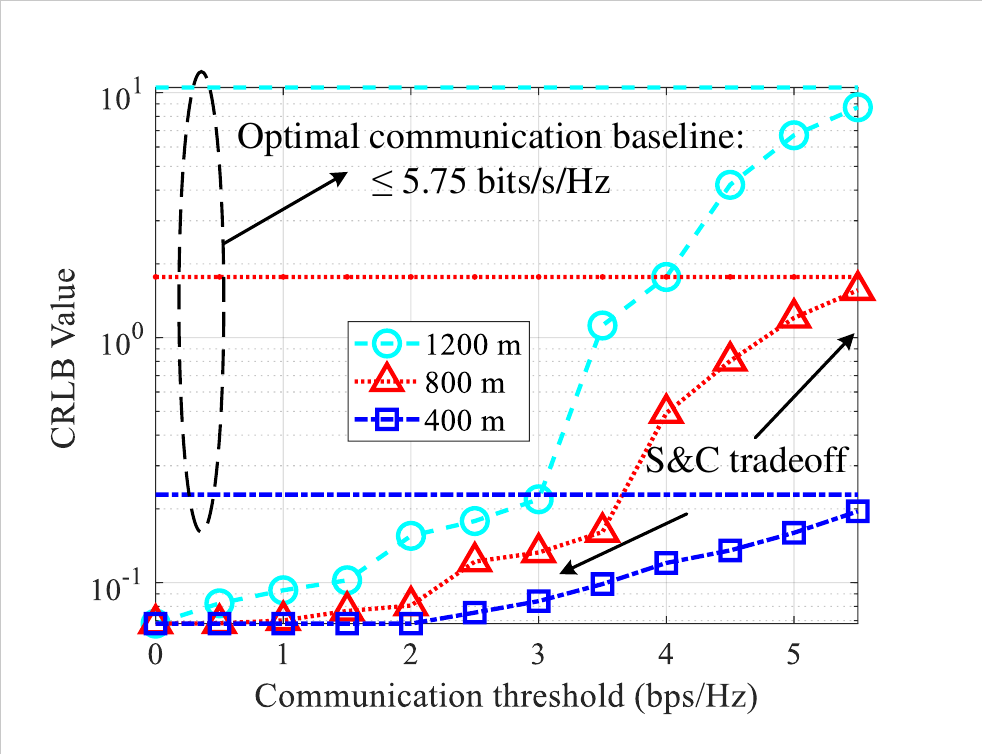}
	\vspace{0mm}
	\caption{S\&C performance trade-offs with optimized BS deployment under different target altitude.}
	\label{figure7}
\end{figure}

As illustrated in Fig. \ref{figure7}, optimizing the BS deployment purely for maximum communication throughput yields a peak rate of 5.75 bps/Hz, but results in uniformly low sensing coverage \(\Pr(\mathrm{CRLB}\le\Gamma)\) across all CRLB thresholds. By imposing a modest throughput constraint, reducing the peak rate to 5 bps/Hz (a 23\% decrease), we achieve roughly a 20\% uplift in sensing coverage across the board. Finally, if the optimization objective is switched to minimizing the CRLB with no communication‐rate requirement, the sensing coverage at \(\Gamma=0.1\) increases by nearly threefold. These findings vividly illustrate the inherent trade‐off between communication throughput and sensing performance when designing BS deployments. Fig. \ref{figure7} illustrates how base-station placement governs the trade-off between communication and sensing performance. As the minimum communication rate requirement is tightened, sensing coverage deteriorates, underscoring the influence of BS geometry on this dual-function trade-off. Remarkably, by perturbing the positions of just a few BSs, at the cost of under a 10\% reduction in peak data rate, sensing performance can be boosted substantially. Compared to the baseline layout optimized solely for maximum throughput, these minor relocations yield a network configuration that simultaneously satisfies both communication and sensing criteria. Moreover, Fig. \ref{figure7} reveals that this trade-off exacerbates at higher target altitudes. Explicitly, as the mean separation between the sensing target region and the communication user region increases, maintaining both objectives simultaneously becomes increasingly difficult.

\section{Conclusions}

A novel deployment framework was proposed for ISAC networks that addresses the fundamental tradeoff between S\&C. Our design ensures both localization coverage and communication performance, guaranteeing high-precision localization and high-throughput data transmission across the entire service region. We analyzed the complexity of cooperative sensing among multiple BSs from a localization-performance coverage perspective, ensuring that the CRLB requirements are met uniformly across the region. By identifying key invariance properties of the optimal deployment, such as shift, rotation, and scaling invariance, we developed a low-complexity algorithmic framework for ISAC network planning.
We formulated the joint sensing-communication optimization as a structured problem amenable to the MM principle, resulting in an MM-based algorithm that reliably converges to high-quality solutions at a low computational cost per iteration. Simulation results demonstrated that the proposed framework significantly enhances sensing coverage, while maintaining communication throughput with minimal performance degradation. This work provides insights for designing practical large-scale ISAC networks. Exploring joint resource scheduling and BS placement in dynamic scenarios offers valuable avenues for future research. Moreover, extending the area CRLB to exploit informative priors, using Bayesian CRLB formulations, merits further investigation.

\vspace{-2mm}
\footnotesize  	
\bibliography{mybibfile}

\begin{thebibliography}{10}
\providecommand{\url}[1]{#1}
\csname url@samestyle\endcsname
\providecommand{\newblock}{\relax}
\providecommand{\bibinfo}[2]{#2}
\providecommand{\BIBentrySTDinterwordspacing}{\spaceskip=0pt\relax}
\providecommand{\BIBentryALTinterwordstretchfactor}{4}
\providecommand{\BIBentryALTinterwordspacing}{\spaceskip=\fontdimen2\font plus
\BIBentryALTinterwordstretchfactor\fontdimen3\font minus
  \fontdimen4\font\relax}
\providecommand{\BIBforeignlanguage}[2]{{%
\expandafter\ifx\csname l@#1\endcsname\relax
\typeout{** WARNING: IEEEtran.bst: No hyphenation pattern has been}%
\typeout{** loaded for the language `#1'. Using the pattern for}%
\typeout{** the default language instead.}%
\else
\language=\csname l@#1\endcsname
\fi
#2}}
\providecommand{\BIBdecl}{\relax}
\BIBdecl

\bibitem{lu2021real}
Y.~Lu, H.~Ma, E.~Smart, and H.~Yu, ``Real-time performance-focused localization
  techniques for autonomous vehicle: A review,'' \emph{IEEE Trans. Intell.
  Transp. Syst.}, vol.~23, no.~7, pp. 6082--6100, 2021.

\bibitem{zhang2022artificial}
Z.~Zhang, F.~Wen, Z.~Sun, X.~Guo, T.~He, and C.~Lee, ``Artificial
  intelligence-enabled sensing technologies in the {5G}/internet of things era:
  from virtual reality/augmented reality to the digital twin,'' \emph{Advanced
  Intelligent Systems}, vol.~4, no.~7, p. 2100228, 2022.

\bibitem{Zhang2021OverviewSignal}
J.~A. Zhang \emph{et~al.}, ``An overview of signal processing techniques for
  joint communication and radar sensing,'' \emph{IEEE J. Sel. Top. Signal
  Process.}, vol.~15, no.~6, pp. 1295--1315, 2021.

\bibitem{Liu2022SurveyFundamental}
A.~Liu \emph{et~al.}, ``A survey on fundamental limits of integrated sensing
  and communication,'' \emph{IEEE Commun. Surveys Tuts.}, vol.~24, no.~2, pp.
  994--1034, Nov. 2022.

\bibitem{Meng2023SensingAssisted}
K.~Meng, Q.~Wu, W.~Chen, and D.~Li, ``Sensing-assisted communication in
  vehicular networks with intelligent surface,'' \emph{IEEE Trans. Veh.
  Technol.}, vol.~73, no.~1, pp. 876--893, 2024.

\bibitem{10473676}
X.~Wang, Z.~Fei, P.~Liu, J.~A. Zhang, Q.~Wu, and N.~Wu, ``Sensing-aided covert
  communications: Turning interference into allies,'' \emph{IEEE Trans.
  Wireless Commun.}, vol.~23, no.~9, pp. 10\,726--10\,739, 2024.

\bibitem{10216343}
T.~Ma, Y.~Xiao, X.~Lei, and M.~Xiao, ``Integrated sensing and communication for
  wireless extended reality ({XR}) with reconfigurable intelligent surface,''
  \emph{IEEE J. Sel. Topics Signal Process.}, vol.~17, no.~5, pp. 980--994,
  2023.

\bibitem{Cui2021Integrating}
Y.~Cui, F.~Liu, X.~Jing, and J.~Mu, ``Integrating sensing and communications
  for ubiquitous {IoT}: Applications, trends, and challenges,'' \emph{IEEE
  Net.}, vol.~35, no.~5, pp. 158--167, Sep./Oct. 2021.

\bibitem{Ouyang2022Performance}
C.~Ouyang, Y.~Liu, and H.~Yang, ``Performance of downlink and uplink integrated
  sensing and communications ({ISAC}) systems,'' \emph{IEEE Wireless Commun.
  Lett.}, vol.~11, no.~9, pp. 1850--1854, Sep. 2022.

\bibitem{Liu2022Integrated}
F.~Liu, Y.~Cui, C.~Masouros, J.~Xu, T.~X. Han, Y.~C. Eldar, and S.~Buzzi,
  ``Integrated sensing and communications: Towards dual-functional wireless
  networks for {6G} and beyond,'' \emph{IEEE J. Sel. Areas Commun.}, vol.~40,
  no.~6, pp. 1728--1767, Jun. 2022.

\bibitem{Meng2024UAV}
K.~Meng, Q.~Wu, J.~Xu, W.~Chen, Z.~Feng, R.~Schober, and A.~L. Swindlehurst,
  ``{UAV}-enabled integrated sensing and communication: Opportunities and
  challenges,'' \emph{IEEE Wireless Communications}, vol.~31, no.~2, pp.
  97--104, 2024.

\bibitem{Liu2023DistributedUnsupervised}
X.~Liu \emph{et~al.}, ``Distributed unsupervised learning for interference
  management in integrated sensing and communication systems,'' \emph{IEEE
  Trans. Wireless Commun.}, pp. 1--1, 2023.

\bibitem{Hua20243DMultiTarget}
M.~Hua \emph{et~al.}, ``{3D} multi-target localization via intelligent
  reflecting surface: Protocol and analysis,'' \emph{IEEE Trans. Wireless
  Commun.}, 2024.

\bibitem{valiulahi2023net}
I.~Valiulahi, C.~Masouros, and A.~Salem, ``Net-zero energy dual-functional
  radar-communication systems,'' \emph{IEEE Trans. Green Commun. Netw.},
  vol.~7, no.~1, pp. 356--369, 2023.

\bibitem{Meng2024CooperativeISACMag}
K.~Meng \emph{et~al.}, ``Cooperative {ISAC} networks: Opportunities and
  challenges,'' \emph{IEEE Wireless Commun.}, vol.~32, no.~3, pp. 212--219,
  2025.

\bibitem{meng2024integrated}
K.~Meng, C.~Masouros, K.-K. Wong, A.~P. Petropulu, and L.~Hanzo, ``Integrated
  sensing and communication meets smart propagation engineering: Opportunities
  and challenges,'' \emph{IEEE Network}, vol.~39, no.~2, pp. 278--285, 2025.

\bibitem{han2025signaling}
K.~Han, K.~Meng, and C.~Masouros, ``Signaling design for noncoherent
  distributed integrated sensing and communication systems,'' \emph{arXiv
  preprint arXiv:2501.18264}, 2025.

\bibitem{Meng2024CooperativeTWC}
K.~Meng, C.~Masouros, A.~P. Petropulu, and L.~Hanzo, ``Cooperative {ISAC}
  networks: Performance analysis, scaling laws, and optimization,'' \emph{IEEE
  Trans. Wireless Commun.}, vol.~24, no.~2, pp. 877--892, 2025.

\bibitem{meng2023network}
K.~Meng, C.~Masouros, G.~Chen, and F.~Liu, ``Network-level integrated sensing
  and communication: Interference management and {BS} coordination using
  stochastic geometry,'' \emph{IEEE Trans. Wireless Commun.}, vol.~23, no.~12,
  pp. 19\,365--19\,381, 2024.

\bibitem{chen2022enhancing}
W.~Chen, L.~Li, Z.~Chen, T.~Quek, and S.~Li, ``Enhancing {THz}/{mmWave} network
  beam alignment with integrated sensing and communication,'' \emph{IEEE
  Commun. Lett.}, vol.~26, no.~7, pp. 1698--1702, 2022.

\bibitem{meng2024network}
K.~Meng, K.~Han, C.~Masouros, and L.~Hanzo, ``Network-level {ISAC}: An
  analytical study of antenna topologies ranging from massive to cell-free
  {MIMO},'' \emph{IEEE Transactions on Wireless Communications}, vol.~24,
  no.~12, pp. 10\,003--10\,018, 2025.

\bibitem{han2025network}
K.~Han, K.~Meng, X.-Y. Wang, and C.~Masouros, ``Network-level {ISAC} design:
  State-of-the-art, challenges, and opportunities,'' \emph{IEEE Journal of
  Selected Topics in Electromagnetics, Antennas and Propagation}, vol.~1,
  no.~1, pp. 65--83, 2025.

\bibitem{xu2014cooperative}
J.~Xu \emph{et~al.}, ``Cooperative distributed optimization for the hyper-dense
  small cell deployment,'' \emph{IEEE Commun. Mag.}, vol.~52, no.~5, pp.
  61--67, 2014.

\bibitem{lu2024integrated}
S.~Lu \emph{et~al.}, ``Integrated sensing and communications: Recent advances
  and ten open challenges,'' \emph{IEEE Internet Things J.}, vol.~11, no.~11,
  pp. 19\,094--19\,120, 2024.

\bibitem{akbarzadeh2012probabilistic}
V.~Akbarzadeh \emph{et~al.}, ``Probabilistic sensing model for sensor placement
  optimization based on line-of-sight coverage,'' \emph{IEEE Trans. Instrum.
  Meas.}, vol.~62, no.~2, pp. 293--303, 2012.

\bibitem{Sadeghi2021Target}
M.~Sadeghi, F.~Behnia, R.~Amiri, and A.~Farina, ``Target localization geometry
  gain in distributed {MIMO} radar,'' \emph{IEEE Trans. Signal Process.},
  vol.~69, pp. 1642--1652, 2021.

\bibitem{Yang2023Deployment}
Z.~Yang, S.~Bi, and Y.-J.~A. Zhang, ``Deployment optimization of
  dual-functional uavs for integrated localization and communication,''
  \emph{IEEE Trans. Wireless Commun.}, vol.~22, no.~12, pp. 9672--9687, 2023.

\bibitem{Jing2024ISACSky}
X.~Jing, F.~Liu, C.~Masouros, and Y.~Zeng, ``{ISAC} from the sky: {UAV}
  trajectory design for joint communication and target localization,''
  \emph{IEEE Trans. Wireless Commun.}, vol.~23, no.~10, pp. 12\,857--12\,872,
  2024.

\bibitem{shi2022device}
Q.~Shi, L.~Liu, S.~Zhang, and S.~Cui, ``Device-free sensing in {OFDM} cellular
  network,'' \emph{IEEE J. Sel. Areas Commun.}, vol.~40, no.~6, pp. 1838--1853,
  Jun. 2022.

\bibitem{ropitault2024ieee}
T.~Ropitault, C.~R. da~Silva, S.~Blandino, A.~Sahoo, N.~Golmie, K.~Yoon,
  C.~Aldana, and C.~Hu, ``{IEEE} 802.11 bf {WLAN} sensing procedure: Enabling
  the widespread adoption of {WiFi} sensing,'' \emph{IEEE Communications
  Standards Magazine}, vol.~8, no.~1, pp. 58--64, 2024.

\bibitem{nitsche2014ieee}
T.~Nitsche, C.~Cordeiro, A.~B. Flores, E.~W. Knightly, E.~Perahia, and J.~C.
  Widmer, ``{IEEE} 802.11 ad: directional 60 {GHz} communication for
  multi-gigabit-per-second {Wi-Fi},'' \emph{IEEE Communications Magazine},
  vol.~52, no.~12, pp. 132--141, 2014.

\bibitem{Liu2020JointTransmit}
X.~Liu, T.~Huang, N.~Shlezinger, Y.~Liu, J.~Zhou, and Y.~C. Eldar, ``Joint
  transmit beamforming for multiuser {MIMO} communications and {MIMO} radar,''
  \emph{IEEE Trans. Signal Process.}, vol.~68, pp. 3929--3944, 2020.

\bibitem{Hua2023Optimal}
H.~Hua, J.~Xu, and T.~X. Han, ``Optimal transmit beamforming for integrated
  sensing and communication,'' \emph{{IEEE} Trans. Veh. Technol.}, vol.~72,
  no.~8, pp. 10\,588--10\,603, 2023.

\bibitem{li2008mimo}
J.~Li and P.~Stoica, \emph{{MIMO} radar signal processing}.\hskip 1em plus
  0.5em minus 0.4em\relax John Wiley \& Sons, 2008.

\bibitem{mishra2019toward}
K.~V. Mishra \emph{et~al.}, ``Toward millimeter-wave joint radar
  communications: A signal processing perspective,'' \emph{IEEE Signal Proces.
  Mag.}, vol.~36, no.~5, pp. 100--114, Sep. 2019.

\bibitem{Park2016OptimalFeedback}
J.~Park, N.~Lee, J.~G. Andrews, and R.~W. Heath, ``On the optimal feedback rate
  in interference-limited multi-antenna cellular systems,'' \emph{IEEE Trans.
  Wireless Commun.}, vol.~15, no.~8, pp. 5748--5762, Aug. 2016.

\bibitem{Hosseini2016Stochastic}
K.~Hosseini, W.~Yu, and R.~S. Adve, ``A stochastic analysis of network {MIMO}
  systems,'' \emph{IEEE Trans. Signal Process.}, vol.~64, no.~16, pp.
  4113--4126, Aug. 2016.

\bibitem{zhang2006schur}
F.~Zhang, \emph{The Schur complement and its applications}.\hskip 1em plus
  0.5em minus 0.4em\relax Springer Science \& Business Media, 2006, vol.~4.

\bibitem{yuan2015recent}
Y.-x. Yuan, ``Recent advances in trust region algorithms,'' \emph{Mathematical
  Programming}, vol. 151, pp. 249--281, 2015.

\bibitem{Sun2016Investigation}
S.~Sun \emph{et~al.}, ``Investigation of prediction accuracy, sensitivity, and
  parameter stability of large-scale propagation path loss models for {5G}
  wireless communications,'' \emph{IEEE Trans. Veh. Technol.}, vol.~65, no.~5,
  pp. 2843--2860, 2016.

\end{thebibliography}
\bibliographystyle{IEEEtran}

\end{document}